\newtheorem{theorem}{Theorem}
\newtheorem{lemma}{Lemma}
\newcommand{\PreserveBackslash}[1]{\let\temp=\\#1\let\\=\temp}
\newcolumntype{C}[1]{>{\PreserveBackslash\centering}p{#1}}
\newcolumntype{R}[1]{>{\PreserveBackslash\raggedleft}p{#1}}
\newcolumntype{L}[1]{>{\PreserveBackslash\raggedright}p{#1}}
\renewcommand{\maketag@@@}[1]{\hbox{\m@th\normalsize\normalfont#1}}%
\begin{document}
\bstctlcite{IEEEexample:BSTcontrol} 
\title{Movable Antenna Empowered Secure Near-Field MIMO Communications}
\author{Yaodong~Ma,~\IEEEmembership{Graduate Student Member,~IEEE,}
	Kai~Liu,~\IEEEmembership{Member,~IEEE,}\\
	Yanming~Liu,~\IEEEmembership{Member,~IEEE,}
	and Lipeng~Zhu,~\IEEEmembership{Member,~IEEE}
	\thanks{This work is supported by the National Natural Science Foundation of
		China under Grant No. U2133210 and U2233216, and the National Key
		R\&D Program of China under Grant No. 2023YFB4302800. \textit{(Corresponding author: Lipeng Zhu)}
		
		Y. Ma and K. Liu are with the School of Electronics and Information Engineering, Beihang University, Beijing, 100191, China, and the State Key Laboratory of CNS/ATM, Beijing, 100191, China (e-mail: \{yaodongma, liuk\}@buaa.edu.cn).
		
		Y. Liu is with the China Satellite Network System Company Ltd., Beijing, 100071, China (e-mail: liuyanming@buaa.edu.cn).
		
		L. Zhu is with the Department of Electrical and Computer Engineering, National
		University of Singapore, Singapore 117583 (e-mail: zhulp@nus.edu.sg).
	}
	\thanks{}}
\vspace{-10mm}
\maketitle
\vspace{-10mm}

\begin{abstract}
This paper investigates movable antenna (MA) empowered secure transmission in near-field multiple-input multiple-output (MIMO) communication systems, where the base station (BS) equipped with an MA array transmits confidential information to a legitimate user under the threat of a potential eavesdropper. To enhance physical layer security (PLS) of the considered system, we aim to maximize the secrecy rate by jointly designing the hybrid digital and analog beamformers, as well as the positions of MAs at the BS. To solve the formulated non-convex problem with highly coupled variables, an alternating optimization (AO)-based algorithm is introduced by decoupling the original problem into two separate subproblems. Specifically, for the subproblem of designing hybrid beamformers, a semi-closed-form solution for the fully-digital beamformer is first derived by a weighted minimum mean-square error (WMMSE)-based algorithm. Subsequently, the digital and analog beamformers are determined by approximating the fully-digital beamformer through the manifold optimization (MO) technique. For the MA positions design subproblem, we utilize the majorization-minimization (MM) algorithm to iteratively optimize each MA’s position while keeping others fixed. Extensive simulation results validate the considerable benefits of the proposed MA-aided near-field beam focusing approach in enhancing security performance compared to the traditional far-field and/or the fixed position antenna (FPA)-based systems. In addition, the proposed scheme can realize secure transmission even if the eavesdropper is located in the same direction as the user and closer to the BS.
\end{abstract} 

\begin{IEEEkeywords}
Movable antenna (MA), near-field communications, physical layer security (PLS), hybrid beamforming, beam focusing.
\end{IEEEkeywords}

%
\IEEEpeerreviewmaketitle

\section{Introduction}
\IEEEPARstart{T}{he} explosive growth of wireless communications has not only expanded network coverage and increased heterogeneity but also raised significant concerns regarding security and privacy \cite{nguyen2021security}. The inherent openness of wireless communications exposes transmitted data to potential interception by unauthorized entities. To tackle this challenge, physical layer security (PLS) technologies offer the advantage of low computational complexity and low probability of interception \cite{wang2018survey}, making them a promising alternative to conventional encryption/decryption methods. 
To enhance the secrecy performance of PLS, several technologies have been proposed, including cooperative jamming \cite{liao2024jamming}, artificial noise (AN) \cite{ding2024secure}, and secure beamforming \cite{mukherjee2021secrecy, shi2015secure, song2024deep}, among which beamforming has emerged as an effective technique for enhancing PLS by leveraging spatial degrees of freedom (DoFs).
This method directs beams toward legitimate users while minimizing signal leakage toward potential threats through the coordinated transmission of multiple antennas. Besides, the secrecy performance relies on the size of the antenna array \cite{liu2023near}. Specifically, as the array size increases, the main lobe of the beam becomes narrower, enabling more directional transmissions with higher beamforming gain to legitimate receivers and lower power leakage to eavesdroppers.
Therefore, it is a significant trend to explore the potential advantages of secure communications by using large-aperture antenna arrays. Nevertheless, deploying large-scale antenna arrays in high-frequency bands results in a significantly increase in the Rayleigh distance. As a result, transceivers may fall within the near-field regions of each other, causing transition from far-field beamforming to near-field beamforming (also known as beam focusing \cite{zhang2022beam, cui2022near}). This indicates that PLS designs based on the far-field channel, which assumes plane wavefronts of the electromagnetic field rather than spherical ones, are no longer applicable in near-field applications. 

The essential difference between near-field and far-field beamforming lies in the fact that the former can focus energy in both the angular and distance domains, whereas far-field beamforming is restricted to steering energy only in the angular domain. 
The near-field channel model introduces an additional spatial dimension, which offers extra DoFs over the far-field model, thereby improving PLS transmission performance. Nevertheless, most studies on PLS have primarily assumed far-field channel conditions and secure communications via beam steering \cite{maY2024movable,tang2024secure,dong2020enhancing}. This assumption limits the potential improvement in secrecy performance provided by spatial beamforming in near-field scenarios. For example, if the user and eavesdropper are situated in the same direction relative to the base station (BS), the angular domain beam steering vector cannot differentiate between them, resulting in confidential information being intercepted by the eavesdropper. To explore the potential of near-field secure communications, several studies \cite{zhao2024performance, zhang2024near, zhang2024physical} have investigated near-field beam focusing in terms of PLS. Specifically, the authors in \cite{zhao2024performance} analyzed the secrecy performance under near-field and far-field conditions and obtained the secrecy rate in closed form. The authors in \cite{zhang2024near} maximized the secrecy rates in near-field wideband communication systems by jointly optimizing the analog beamformer and transmit power. Besides, the authors in \cite{zhang2024physical} developed a two-stage algorithm to maximize the secrecy rate of near-field multiple-input multiple-output (MIMO) communication systems by utilizing hybrid beamformers. 
However, the mentioned traditional fixed-position antenna (FPA)-based MIMO systems require the deployment of an extremely large number of antennas, along with their associated radio frequency (RF) front ends at the BS. For example, in a square region of size $100$ wavelengths, the total number of antennas in an array with half-wavelength inter-antenna spacing is $4 \times 10^4$. Such high hardware costs, energy consumption, and signal processing overhead may hinder the efficiency of secure communications in near-field scenarios.

Recently, to overcome the performance limitations of conventional FPAs, the concept of the movable antenna (MA), also referred to the fluid antenna system, has been introduced to wireless systems \cite{zhu2023movableMag, zhu2025tutorial, zhu2024historical}.
MA is a cost-effective solution that requires fewer antennas and RF chains, and equivalently achieves beam focusing by simply expanding the antenna moving region, demonstrating the significant potential for secure communications \cite{zhu2023modeling, ma2023mimo}.
Various works have validated the benefits of MA-aided communication systems in terms of spatial diversity gain \cite{zhu2023modeling, zhu2024performance,mei2024movable,xiao2024channel}, spatial multiplexing performance \cite{ma2023mimo, shao20246d, shao20256dma, shao20256d}, multiuser interference mitigation \cite{zhu2023movable, xiao2023multiuser, zhou2024movable}, and flexible
beamforming \cite{zhu2023movable, lyu2024flexible, zhu2024dynamic, ma2024movable}. 
As for MA-enhanced secure communication, the authors in \cite{maY2024movable} investigated the secure communications in integrated sensing and communications (ISAC) systems by designing a two-layer iterative algorithm, where the transmit and receive beamformers, the antenna positions, and the reconfigurable intelligent surface (RIS) reflection coefficients were jointly optimized. The authors in \cite{hu2024secure, xiong2025secure} maximized the secrecy rate by using the projected gradient ascent (PGA) algorithm, where the transmit beamformers and positions of MAs were jointly optimized.
To further improve the secure transmission performance, the authors in \cite{tang2024secure, ding2024secure, liao2024jamming} utilized AN to mislead eavesdropping. By jointly optimizing AN beamformers and MAs’ positions, interference toward illegitimate receivers is enhanced, thereby reducing the risk of confidential information disclosure. In addition, considering the imperfect channel state information (CSI) available to the eavesdropper, the authors in \cite{hu2024movable} minimized the secrecy outage probability by utilizing statistical CSI. Similarly, MA positions were optimized to maximize the secrecy rate \cite{feng2024movable} and minimize transmit powers \cite{cheng2025movable}. Apart from the PLS techniques, MA-aided covert communication, which aims to enhance transmission security by embedding the signal within noise or AN by intentional randomness, thereby reducing the likelihood of being detected by potential eavesdroppers \cite{liu2024movable, wang2024movable, xie2024movable, cheng2024movable}. 
However, all of the aforementioned works \cite{maY2024movable, tang2024secure, ding2024secure, liao2024jamming, hu2024movable, cheng2025movable, feng2024movable, liu2024movable, wang2024movable, xie2024movable, cheng2024movable} on MA-enabled systems focus on securing far-field transmission and do not reveal the performance upper bound in near-field regions.

In practice, in order to improve communication performance, MA-enabled wireless systems typically require larger movement areas (i.e., aperture sizes) to provide higher spatial flexibility than traditional FPA-based systems \cite{zhu2025movable}. As these areas expand and/or carrier frequencies rise (e.g., mmWave and THz bands), the Rayleigh distance also increases, which can render the far-field assumption inaccurate in upcoming wireless systems, thereby making the near-field spherical-wave model necessary. Consequently, several studies have begun exploring the potential benefits that MAs could provide in near-field situations. Specifically, in \cite{zhu2025movable}, the field response channel model was extended from the far-field to near-field propagation environment for MA systems, where the performance of multiuser communications with both digital and analog beamforming strategies was analyzed.
The authors in \cite{ding2024near} minimized the transmit power for near-field multiuser scenarios, where the MA positions and transmit beamformers were jointly designed. The authors in \cite{wang2025antenna} investigated MA-assisted near-field sensing, focusing on estimating the angle and/or distance of a target using a linear MA array. Besides, another recent work \cite{ding2024movable} employed MAs to improve sensing and communication capabilities. In particular, a two-layer random position approach was proposed to jointly optimize the transmit/receive beamformers, sensing signal covariance matrices, the uplink link power allocation, and MA positions at the BS. Furthermore, the authors in \cite{zhu2024suppressing} utilized MAs to mitigate the beam squint effect in wideband multiple-input single-output (MISO) communication systems.
Although works \cite{zhu2025movable,ding2024near,wang2025antenna,ding2024movable,zhu2024suppressing} investigated MA positions designs for communication or sensing, the role of MAs in securing transmission in near-field systems was not unveiled, and the transceiver designs for far-field conditions cannot be directly applied.
Meanwhile, studies on optimizing MA positions in near-field conditions remain in their early stages.

Motivated by the above discussions, to fully exploit the DoFs in channel reconfiguration provided by the MAs and near-field beam focusing, this paper investigates MAs-assisted secure transmission in near-field MIMO systems, where the multi-MA-enabled BS transmits private information to the user using multiple FPAs, while an eavesdropper with multiple FPAs is located in the same direction as the user to intercept the legitimate information. 
The main contributions of this paper are summarized as follows:
\begin{itemize}
	\item We consider an MA-aided near-field MIMO communication system in the presence of an eavesdropper, aiming at enhancing the PLS. Specifically, an optimization problem is formulated for maximizing the secrecy rate by jointly optimizing the hybrid beamformers and the positions of MAs at the BS, subject to the transmit power constraint, the analog beamformer unit-modulus constraints, the MA moving region constraint, and the minimum inter-MA distance constraint.
	\item We propose an algorithm based on the alternating optimization (AO) method to decompose the complex non-convex optimization problem into two independent subproblems. For the design of hybrid beamforming at the BS, a two-stage approach is developed. During the first stage, the fully-digital beamformer is obtained in semi-closed-form using a weighted minimum mean-square error (WMMSE) algorithm. During the second stage, analog and baseband digital beamformers are derived by employing the manifold optimization (MO) approach to approximate the fully digital beamformers. For optimizing MA positions, we employ the majorization-minimization (MM) technique to iteratively adjust the position of each MA while keeping the others fixed.
	\item Extensive simulation results are presented to validate the considerable benefits of the MAs-assisted schemes for achieving a higher secrecy rate compared to other benchmarks. Furthermore, the proposed approach enables secure transmission even when the eavesdropper is positioned in the same direction as the user and is closer to the BS.
\end{itemize}

The remainder of the paper is organized as follows. Section II introduces the system model and problem formulation. Section III presents the detailed solution for the considered MA-aided near-field MIMO communication systems. Section IV provides the simulation results, and conclusions are finally drawn in Section V.

\textit{Notation}: $a$, $\mathbf{a}$, and $\mathbf{A}$ represent a scalar, a vector, and a matrix, respectively. Besides, for a vector $\mathbf{a}$, $\mathbf{a}_i$ denote its $i$-th element. The 2-norm of vector $\mathbf{a}$ is defined by $\|\mathbf{a}\|$. $|\mathbf{A}|$ and $\|\mathbf{A}\|_F$ denote the determinant and Frobenius norm of matrix $\mathbf{A}$, respectively. $(\cdot)^{\rm T}$, $(\cdot)^{*}$, and $(\cdot)^{\rm H}$ represent the transpose, conjugate, and Hermitian transpose operations, respectively. $\mathbf{I}_N$ indicates the $N$-order identity matrix. In addition, we denote the complex circularly symmetric Gaussian distribution with mean zero and covariance $b$ as $\mathcal{CN}\left(0,b\right)$. ${\rm diag}\left\{\mathbf{a}\right\}$ represents a diagonal matrix with the elements of vector $\mathbf{a}$ on the main diagonal. $\circ$ and $\otimes$ denote the Hadamard product and Kronecker product, respectively.  The gradient vector of the function $f$ with respect to (w.r.t.) vector $\mathbf{x}$ is denoted by $\nabla_{\mathbf{x}}f(\mathbf{x})$, and $\nabla^2_{\mathbf{x}}f(\mathbf{x})$ denotes the Hessian matrix of $f$ w.r.t. vector $\mathbf{x}$. The real and imaginary components of a complex number are represented by $\Re\{\cdot\}$ and $\Im\{\cdot\}$, respectively. The phase of complex number $a$ is represented by $\angle{a}$. The symbol $\mathbb{E}\{\cdot\}$ represents the statistical expectation. Furthermore, ${\rm vec}(\mathbf{V})$ and ${\rm Tr}(\mathbf{V})$ are the vectorization operations of the matrix and its trace, respectively.

\section{System Model and Problem Formulation}
\begin{figure}[t]
	\begin{center}
		\includegraphics[width= 3.3 in]{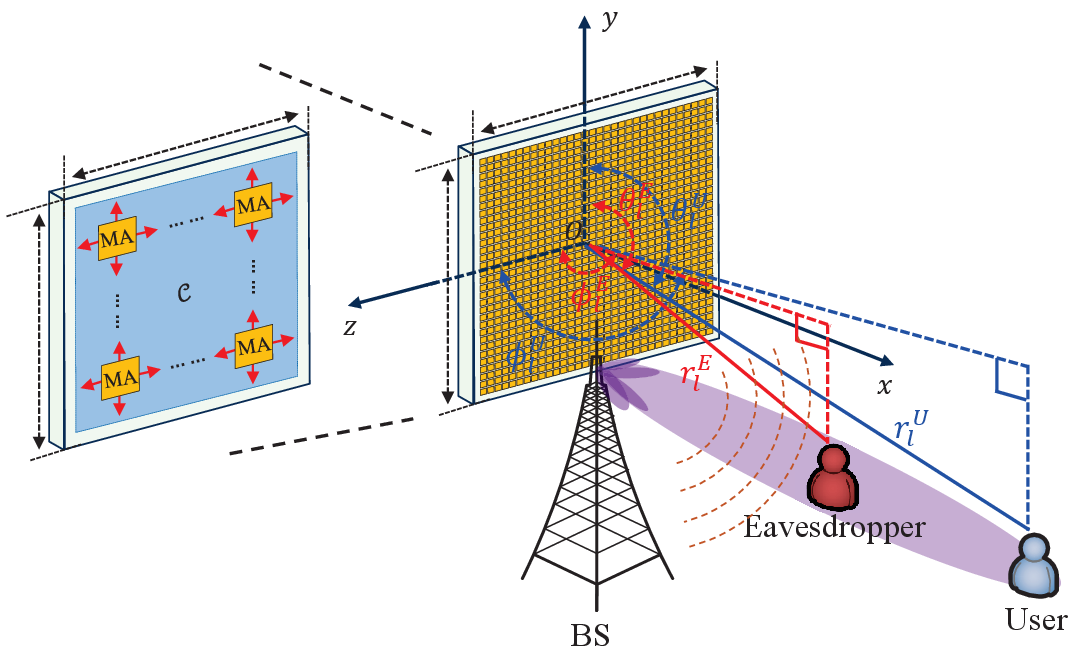}
		\caption{Illustration of the MA-aided near-field MIMO communication system.} \label{fig::systemModel}
	\end{center}
	\vspace{-1.5em}
\end{figure}

In this paper, we consider an MA-aided near-field MIMO communication system, as illustrated in Fig. \ref{fig::systemModel}, where the BS transmits information to the legitimate user, while an eavesdropper attempts to intercept the data. The BS is equipped with an MA array consisting of $M$ antenna elements, while the user and eavesdropper are each equipped with $L_U \geq 2$ and $L_E \geq 2$ FPAs, respectively. To reduce the RF chain overhead, the BS adopts a hybrid beamforming architecture \cite{yu2016alternating, zhang2024physical}, while a fully-digital antenna architecture is employed at both the user and eavesdropper for simplicity.
We assume that the distance from the BS to both the user and eavesdropper is less than the Rayleigh distance, which is expressed as $d_R = 2D^2/\lambda$, where $D$ and $\lambda$ denote the array aperture and the wavelength, respectively. In other words, both the user and eavesdropper are assumed to be positioned within the BS's near-field region. Thus, the transmitted signals from the BS to both the user and eavesdropper follow spherical-wave propagation. Moreover, we address a challenging scenario in which the eavesdropper is situated between the BS and user, with identical azimuth angles. It is worth noting that this scenario cannot guarantee secure transmission through beamforming in far-field conditions. To resist eavesdropping, the near-field transmission directs the beam pattern toward a particular region, i.e., beam focusing \cite{sohrabi2016hybrid}, making the MA array at the BS promising for achieving secure communications.

\subsection{Channel Model}
As illustrated in Fig.~\ref{fig::systemModel}, a three-dimensional (3D) Cartesian coordinate system is established to describe the positions of MAs, where the MA array at the BS resides in the $y$-$O$-$z$ plane with its central point placed at the coordinate origin $(0,0,0)$. Let $\mathbf{t}_m$, $1 \leq m \leq M$, denote the position of the $m$-th MA, which is confined to the antenna moving region, $\mathcal{C} \triangleq \{(0,y,z)|y \in [-A/2, A/2], z \in [-A/2, A/2]\}$. Besides, we define $I \in \{U, E\}$ as the set consisting of the user and the eavesdropper for further operations. Thus, the position of the $l$-th antenna at the user/eavesdropper is given by $\mathbf{r}^{I}_l = [r^{I}_l \cos\theta^{I}_l \sin\phi^{I}_l, r^{I}_l \sin\theta^{I}_l \sin\phi^{I}_l, r^{I}_l \cos\phi^{I}_l]^{\rm T}$, $1 \leq l \leq L_{{I}}$, where $\theta^{I}_l$, $\phi^{I}_l$, and $r^{I}_l$ are denoted as the azimuth angle, elevation angle, and the distance from the origin to the $l$-th antenna at user/eavesdropper, respectively.

Since the line-of-sight (LoS) path plays a crucial role compared to non-LoS paths in typical near-field environments over high-frequency bands \cite{cui2022near}, the LoS channel from the BS to the user is adopted. 
As such, the near-field response vector (NFRV) for the channel from the antenna position $\mathbf{t}_m$ to the $l$-th antenna at the legitimate user is represented by
\begin{equation} \label{eq::NFRV}
	\mathbf{a}(\mathbf{t}_m) = \left[ e^{j\frac{2\pi}{\lambda}\|\mathbf{t}_m-\mathbf{r}^U_1\|_2}, \cdots, e^{j\frac{2\pi}{\lambda}\|\mathbf{t}_m-\mathbf{r}^U_{L_U}\|_2}\right]^{\rm T}.
\end{equation}
Then, by denoting $\mathbf{t} = [\mathbf{t}_1, \cdots, \mathbf{t}_M]$ as the position vector of the MA, the channel from the BS to the user is represented by
\begin{equation} \label{eq::legitimate_channel}
	\mathbf{H}(\mathbf{t}) = \left[\mathbf{h}(\mathbf{t}_1), \cdots, \mathbf{h}(\mathbf{t}_M)\right],
\end{equation}
where $\mathbf{h}(\mathbf{t}_m) \!=\! \Big[ g_{1,m}e^{j\frac{2\pi}{\lambda}\!\|\mathbf{t}_m-\mathbf{r}^U_1\|_2}, \!\cdots\!,$ $g_{L_U,m}e^{j\frac{2\pi}{\lambda}\!\|\mathbf{t}_m-\mathbf{r}^U_{L_U}\|_2}\Big]^{\rm H}$, and $g_{l,m}$ represents the path coefficient between the $m$-th MA at the BS and the $l$-th antenna of the user with $1 \leq l \leq L_U, 1 \leq m \leq M$. The near-field channel from the BS to the eavesdropper, denoted as $\mathbf{Z}(\mathbf{t})$, can be derived using a similar method, which is omitted for simplicity.
In this paper, to provide a secure MIMO communication performance upper bound for resilient designs, it is assumed that the BS has access to the perfect CSI of both the user and eavesdropper. This information can be obtained through effective beam training and/or near-field channel estimation techniques \cite{cui2022channel, lu2023near, cui2022near}.

\textit{Remarks:}  Unlike the far-field channel model, which captures only correlations in the angular domain \cite{zhao2024performance}, the near-field channel model presented in \eqref{eq::legitimate_channel} incorporates variations in both distance and angle for each antenna’s signal propagation paths to the user. As such, beam focusing can be applied to enhance secure transmission by exploiting channel variations between $\mathbf{H}(\mathbf{t})$ and $\mathbf{Z}(\mathbf{t})$.

\subsection{Problem Formulation}
Let $\mathbf{s} \in \mathbb{C}^{K}$ denote $K$ data streams transmitted from the BS to the user using $N$ transmit RF chains. As such, the received signal at the legitimate user is obtained as
\begin{equation} \label{eq::user_receive_signal}  \small
	\mathbf{y}_U = \mathbf{H}(\mathbf{t})\mathbf{W}_A\mathbf{W}_D\mathbf{s} + \mathbf{n}_U, 
\end{equation}  
where $\mathbf{W}_D \triangleq [\mathbf{w}_1, \cdots, \mathbf{w}_K] \in \mathbb{C}^{N \times K}$ and $\mathbf{W}_A \in \mathbb{C}^{M \times N}$ represent the digital and analog beamformers, respectively. Moreover, the element $w_{i,j}$ in the $i$-th row and $j$-th column of $\mathbf{W}_A$ must adhere to the unit-modulus condition, which is given by
\begin{equation} \label{eq::analog_beamformer} \small
	w_{i,j} \in \mathcal{A} \triangleq \left\{r^{j\theta}|\theta \in (0,2\pi]\right\}.
\end{equation}
Besides, $\mathbf{n}_U \sim \mathcal{CN}(\mathbf{0},\sigma^2_U\mathbf{I}_{L_U})$ represents the additive white Gaussian noise (AWGN) of the user. Thus, the achievable rate between the BS and the user is obtained as
\begin{equation} \small \label{eq::legitimate_SINR}
	R_U = \log_2\det\left(\mathbf{I}_{L_U} + \sigma_U^{-2}\mathbf{H}(\mathbf{t})\mathbf{W}_A\mathbf{W}_D\mathbf{W}_D^{\rm H}\mathbf{W}_A^{\rm H}\mathbf{H}^{\rm H}(\mathbf{t})\right).
\end{equation}
Similarly, the received signal at the eavesdropper is given by
\begin{equation} \label{eq::EVE_receive_signal}  \small
	\mathbf{y}_E = \mathbf{Z}(\mathbf{t})\mathbf{W}_A\mathbf{W}_D\mathbf{s} + \mathbf{n}_E, 
\end{equation} 
where $\mathbf{n}_E \sim \mathcal{CN}(\mathbf{0},\sigma^2_E\mathbf{I}_{L_E})$ denotes the AWGN at the eavesdropper. Thus, the achievable rate between the BS and eavesdropper is obtained as
\begin{equation} \small \label{eq::EVE_SINR}
	R_E = \log_2\det\left(\mathbf{I}_{L_E} + \sigma_E^{-2}\mathbf{Z}(\mathbf{t})\mathbf{W}_A\mathbf{W}_D\mathbf{W}_D^{\rm H}\mathbf{W}_A^{\rm H}\mathbf{Z}^{\rm H}(\mathbf{t})\right).
\end{equation}

As such, the secrecy rate is obtained as $R = [R_U - R_E]^+$ \cite{zhang2024physical}. In this paper, we aim to maximize the secrecy rate of the considered MA-aided near-field MIMO communication system by jointly optimizing the positions of MAs $\{\mathbf{t}_m\}_{m=1}^{M}$, the digital beamformer $\mathbf{W}_D$, and the analog beamformer $\mathbf{W}_A$. The corresponding problem is formulated as

\vspace{-1 em}
\begin{subequations} \label{eq::problem} \small
	\begin{align}
	 \max_{\mathbf{W}_D,\mathbf{W}_A,\{\mathbf{t}_m\}_{m=1}^{M}} &\quad  R \label{eq::problem_obj}\\
		{\rm s.t.}~ & \left\|\mathbf{W}_A\mathbf{W}_D\right\|_F^2 \leq P_B, \label{eq::problem_cons_pow}\\
		& \mathbf{t}_{m} \in \mathcal{C}, \: 1 \leq m \leq M, \label{eq::problem_cons_MA}\\
		& \|\mathbf{t}_m-\mathbf{t}_{\hat{m}}\|_2 \geq d_{\min}, 1 \leq m \neq 
		\hat{m} \leq M, \label{eq::problem_cons_MAdist}\\
		& w_{i,j} \in \mathcal{A}, 1 \leq i \leq M, 1 \leq j \leq N, \label{eq::problem_cons_unit-modulus}
	\end{align}
\end{subequations}
where constraint \eqref{eq::problem_cons_pow} represents the power budget at the BS, with $P_B$ being the maximum transmit power; constraint \eqref{eq::problem_cons_MA} specifies the antenna moving region; constraint \eqref{eq::problem_cons_MAdist} guarantees the minimum distance between MAs at the BS no less than the threshold, $d_{\min}$; and \eqref{eq::problem_cons_unit-modulus} is the unit-modulus constraints of the analog beamformer.
Problem \eqref{eq::problem} is non-convex with tightly coupled optimization variables, $\mathbf{W}_D$, $\mathbf{W}_A$, and $\{\mathbf{t}_m\}$, making it challenging to obtain globally optimal solutions within polynomial time.
Thus, we develop an AO-based solution to derive suboptimal solutions for problem \eqref{eq::problem}.

\section{Proposed Solution}
We first decompose the formulated problem into two independent subproblems, which are solved iteratively. Specifically, we develop a two-stage algorithm for the hybrid beamformer design subproblem. During the first stage, the WMMSE and block coordinate descent (BCD) algorithms are used to determine the fully-digital beamformers. During the second stage, the digital beamformer is obtained in closed form, while the analog beamformer is determined using the MO method. Then, for the MA position design subproblem, the MM algorithm is applied to optimize the position of each MA iteratively while maintaining the positions of others fixed.

\subsection{Hybrid Beamforming Design}
Given $\{\mathbf{t}_m\}$, the subproblem for designing hybrid beamforming at the BS can be reformulated as

\vspace{-1 em}
\begin{subequations} \label{eq::subProblem1} \small
	\begin{align}
		 \max_{\mathbf{W}_D,\mathbf{W}_A} \quad & R \label{eq::subProblem1_obj}\\
		{\rm s.t.}~ & \eqref{eq::problem_cons_pow}, \eqref{eq::problem_cons_unit-modulus}.
	\end{align}
\end{subequations}
To provide an upper bound on the secrecy rate performance for the considered hybrid beamforming system, we first design the fully-digital beamformer, denoted by $\mathbf{W} \in \mathbb{C}^{M \times K}$. 

\subsubsection{Stage I: Fully-Digital Beamforming Design} 
The unit-modulus constraints for the analog beamformer in \eqref{eq::problem_cons_unit-modulus} are neglected, and the fully-digital beamforming design problem is reformulated as

\vspace{-1 em}
\begin{subequations} \label{eq::subProblem11_FD} \small
	\begin{align}
		\max_{\mathbf{W}} \quad & \log_2\det\left(\mathbf{I}_{L_U} + \tilde{\mathbf{H}}(\mathbf{t})\mathbf{W}\mathbf{W}^{\rm H}\tilde{\mathbf{H}}^{\rm H}(\mathbf{t})\right)   \nonumber \\
		 &\!\!\!\!\!\!- \log_2\det\left(\mathbf{I}_{L_E} + \tilde{\mathbf{Z}}(\mathbf{t})\mathbf{W}\mathbf{W}^{\rm H}\tilde{\mathbf{Z}}^{\rm H}(\mathbf{t})\right) \label{eq::subProblem1_FD_obj} \\
		{\rm s.t.}~ & 
		\left\|\mathbf{W}\right\|_F^2 \leq P_B,
	\end{align}
\end{subequations}
where we define $\tilde{\mathbf{H}}(\mathbf{t}) = \mathbf{H}(\mathbf{t})\sigma^{-1}_U$ and $\tilde{\mathbf{Z}}(\mathbf{t}) = \mathbf{Z}(\mathbf{t})\sigma_E^{-1}$, respectively. 
Note that problem \eqref{eq::subProblem11_FD} is non-convex due to the intractability of the expression in \eqref{eq::subProblem1_FD_obj}. Thus, we develop a WMMSE-based method, in which the main idea is to transform the original rate maximization problem into another equivalent problem by introducing auxiliary variables. Specifically, this idea can be summarized in Theorem 1.

\begin{theorem}
By introducing auxiliary matrices $\mathbf{P} \in \mathbf{C}^{L_U \times K}$, $\mathbf{Q}_U \in \mathbb{C}^{K \times K}$, $\mathbf{Q}_E \in \mathbb{C}^{L_E \times L_E}$, and defining a matrix function $\mathbb{E}(\mathbf{P}, \mathbf{W}) \triangleq (\mathbf{I} - \mathbf{P}^{\rm H}\tilde{\mathbf{H}}(\mathbf{t})\mathbf{W})(\mathbf{I} - \mathbf{P}^{\rm H}\tilde{\mathbf{H}}(\mathbf{t})\mathbf{W})^{\rm H} + \mathbf{P}^{\rm H}\mathbf{P}$, the following optimization problem

\vspace{-1 em}
\begin{subequations} \label{eq::subProblem1_FD_transform} \small
	\begin{align}
		\max_{\mathbf{P}, \mathbf{Q}_U \succ 0, \mathbf{Q}_E \succ 0, \mathbf{W}}   &\log\det\left(\mathbf{Q}_U\right) - {\rm Tr}\left(\mathbf{Q}_U \mathbb{E}(\mathbf{P}, \mathbf{W})\right) + K \nonumber \\
		+\log\det\left(\mathbf{Q}_E\right) \!-\!& {\rm Tr}\!\left(\mathbf{Q}_E(\mathbf{I}_{L_E} \!+ \! \tilde{\mathbf{Z}}(\mathbf{t})\mathbf{W}\mathbf{W}^{\rm H})\tilde{\mathbf{Z}}^{\rm H}(\mathbf{t})\!\right) + L_E \label{eq::subProblem1_FD_transform_obj} \\
		 {\rm s.t.}~ &
		{\rm Tr}\left(\mathbf{W}\mathbf{W}^{\rm H}\right) \leq P_B,
	\end{align}
\end{subequations}
is equivalent to the secrecy rate maximization problem in \eqref{eq::problem}, and hence, the global optimal solutions for both problems are the same.
\end{theorem}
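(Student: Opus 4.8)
The plan is to show that, for every feasible $\mathbf{W}$, maximizing the objective \eqref{eq::subProblem1_FD_transform_obj} over the auxiliary variables $\mathbf{P}$, $\mathbf{Q}_U$, and $\mathbf{Q}_E$ recovers exactly the original objective $R_U - R_E$ of \eqref{eq::subProblem11_FD} (up to the constant factor arising from the change of logarithm base, which does not affect the maximizer). Since the feasible region for $\mathbf{W}$ is the identical power ball ${\rm Tr}(\mathbf{W}\mathbf{W}^{\rm H}) \le P_B$ in both problems, this pointwise identity of the reduced objectives immediately yields that the two problems share the same optimal $\mathbf{W}$ and the same optimal value. I would therefore split the transformed objective into a user block depending on $(\mathbf{P}, \mathbf{Q}_U)$ and an eavesdropper block depending on $\mathbf{Q}_E$, and handle the two inner maximizations separately.

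For the user block, I would first fix $\mathbf{Q}_U \succ 0$ and minimize ${\rm Tr}(\mathbf{Q}_U\mathbb{E}(\mathbf{P}, \mathbf{W}))$ over the receive filter $\mathbf{P}$. Because $\mathbb{E}(\mathbf{P}, \mathbf{W})$ is a quadratic (hence convex) matrix function of $\mathbf{P}$ that is minimized in the Loewner order by the MMSE receiver $\mathbf{P}^\star = (\mathbf{I}_{L_U} + \tilde{\mathbf{H}}(\mathbf{t})\mathbf{W}\mathbf{W}^{\rm H}\tilde{\mathbf{H}}^{\rm H}(\mathbf{t}))^{-1}\tilde{\mathbf{H}}(\mathbf{t})\mathbf{W}$, the same $\mathbf{P}^\star$ minimizes the weighted trace for every $\mathbf{Q}_U \succ 0$. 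Substituting $\mathbf{P}^\star$ and applying the matrix inversion lemma gives the minimum-MSE matrix $\mathbb{E}_{\min} = (\mathbf{I}_K + \mathbf{W}^{\rm H}\tilde{\mathbf{H}}^{\rm H}(\mathbf{t})\tilde{\mathbf{H}}(\mathbf{t})\mathbf{W})^{-1}$. I would then maximize over $\mathbf{Q}_U$ using the elementary identity $\max_{\mathbf{Q} \succ 0}[\log\det\mathbf{Q} - {\rm Tr}(\mathbf{Q}\mathbf{M})] = -\log\det\mathbf{M} - \dim(\mathbf{M})$, whose optimizer $\mathbf{Q}^\star = \mathbf{M}^{-1}$ is found by setting the gradient $\mathbf{Q}^{-1} - \mathbf{M}$ to zero. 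With $\mathbf{M} = \mathbb{E}_{\min}$, the user block together with the $+K$ term collapses to $-\log\det\mathbb{E}_{\min} = \log\det(\mathbf{I}_K + \mathbf{W}^{\rm H}\tilde{\mathbf{H}}^{\rm H}(\mathbf{t})\tilde{\mathbf{H}}(\mathbf{t})\mathbf{W})$, which by Sylvester's determinant identity equals $\log\det(\mathbf{I}_{L_U} + \tilde{\mathbf{H}}(\mathbf{t})\mathbf{W}\mathbf{W}^{\rm H}\tilde{\mathbf{H}}^{\rm H}(\mathbf{t}))$, i.e., exactly the first term of \eqref{eq::subProblem1_FD_obj}.

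For the eavesdropper block, no receive filter is needed: I would apply the same concave maximization identity directly with $\mathbf{M} = \mathbf{I}_{L_E} + \tilde{\mathbf{Z}}(\mathbf{t})\mathbf{W}\mathbf{W}^{\rm H}\tilde{\mathbf{Z}}^{\rm H}(\mathbf{t})$, so that the optimizer is $\mathbf{Q}_E^\star = \mathbf{M}^{-1}$ and the block plus the $+L_E$ term reduces to $-\log\det(\mathbf{I}_{L_E} + \tilde{\mathbf{Z}}(\mathbf{t})\mathbf{W}\mathbf{W}^{\rm H}\tilde{\mathbf{Z}}^{\rm H}(\mathbf{t}))$, i.e., the negated eavesdropper rate. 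Combining the two blocks, the inner maximization of \eqref{eq::subProblem1_FD_transform_obj} equals $R_U - R_E$ for every feasible $\mathbf{W}$, which establishes the stated equivalence and the coincidence of the maximizers in \eqref{eq::subProblem11_FD} and \eqref{eq::subProblem1_FD_transform}. The step I expect to be the main obstacle is the careful MMSE reduction: verifying via the matrix inversion lemma that $\mathbf{P}^\star$ turns $\mathbb{E}(\mathbf{P}, \mathbf{W})$ into $\mathbb{E}_{\min}$ and, crucially, that the minimizing $\mathbf{P}^\star$ is independent of $\mathbf{Q}_U$. I would also flag the structural asymmetry as the key conceptual point—the auxiliary receiver $\mathbf{P}$ is introduced for the user term not for the equivalence itself but to render $\mathbb{E}(\mathbf{P}, \mathbf{W})$ convex in $\mathbf{W}$ for fixed auxiliaries, which is what makes the subsequent BCD updates tractable, whereas the subtracted eavesdropper term needs only the scalar-to-matrix determinant identity.
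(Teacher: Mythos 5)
Your proposal is correct and follows essentially the same route as the paper's Appendix A: the concave maximization identity $\max_{\mathbf{Q}\succ 0}\{\log\det(\mathbf{Q})-{\rm Tr}(\mathbf{Q}\mathbf{E})\}+a=-\log\det(\mathbf{E})$ with optimizer $\mathbf{Q}^{\star}=\mathbf{E}^{-1}$, the MMSE receiver $\mathbf{P}^{\star}$ collapsing $\mathbb{E}(\mathbf{P},\mathbf{W})$ to $(\mathbf{I}+\mathbf{W}^{\rm H}\tilde{\mathbf{H}}^{\rm H}(\mathbf{t})\tilde{\mathbf{H}}(\mathbf{t})\mathbf{W})^{-1}$, Sylvester's identity, and the asymmetric treatment of the user term (with $\mathbf{P}$) versus the eavesdropper term (without). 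Your added observation that $\mathbf{P}^{\star}$ minimizes $\mathbb{E}(\mathbf{P},\mathbf{W})$ in the Loewner order, and is therefore independent of $\mathbf{Q}_U$, is a slightly more careful justification of a step the paper states without comment.
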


\begin{proof}
	Refer to Appendix \ref{app0}.
\end{proof}

Although the transformed problem \eqref{eq::subProblem1_FD_transform} introduces a greater number of variables compared to the original problem \eqref{eq::subProblem11_FD}, it is more tractable. This reformulation enables the use of the BCD method, which iteratively optimizes the objective function w.r.t. one variable block while keeping the others fixed. Specifically,  optimization variables are divided into three blocks, i.e., $\mathbf{P}$, $\{\mathbf{Q}_U, \mathbf{Q}_E\}$, and $\mathbf{W}$.
First, by fixing $\{\mathbf{Q}_U, \mathbf{Q}_E\}$ and $\mathbf{W}$, the problem of	 solving $\mathbf{P}$ is equivalent to minimizing ${\rm Tr}\left(\mathbf{Q}_U \mathbb{E}(\mathbf{P}, \mathbf{W})\right)$. According to \eqref{eq::optimalP}, the optimal solution of $\mathbf{P}$ is given by
\begin{equation} \small \label{eq::optimal_P}
	\mathbf{P}^{\star} = \left(\mathbf{I}_{L_U} + \tilde{\mathbf{H}}(\mathbf{t})\mathbf{W}\mathbf{W}^{\rm H}\tilde{\mathbf{H}}^{\rm H}(\mathbf{t})\right)^{-1}\tilde{\mathbf{H}}(\mathbf{t})\mathbf{W}.
\end{equation}
Second, by fixing $\mathbf{P}$ and $\mathbf{W}$, the problem of solving $\{\mathbf{Q}_U, \mathbf{Q}_E\}$ can be reduced to two separate subproblems, i.e., $\max_{\mathbf{Q}_U \succ 0}   \log\det\left(\mathbf{Q}_U\right) - {\rm Tr}\left(\mathbf{Q}_U \mathbb{E}(\mathbf{P}, \mathbf{W})\right)$ and $\max_{\mathbf{Q}_E \succ 0}\log\det\left(\mathbf{Q}_E\right) \!-\! {\rm Tr}\!(\mathbf{Q}_E(\mathbf{I}_{L_E} \!+ \! \tilde{\mathbf{Z}}(\mathbf{t})\mathbf{W}\mathbf{W}^{\rm H})\tilde{\mathbf{Z}}^{\rm H}(\mathbf{t})\!)$. As such, according to \eqref{eq::fact1}, the optimal solutions for $\mathbf{Q}_U$ and $\mathbf{Q}_E$ are given by
\begin{equation} \small \label{eq::optimal_QU}
	\mathbf{Q}_U^{\star} = \mathbb{E}(\mathbf{P}^{\star}, \mathbf{W})^{-1},
\end{equation}
and
\begin{equation} \small \label{eq::optimal_QE}
	\mathbf{Q}_E^{\star} = \left(\mathbf{I}_{L_E} \!+ \! \tilde{\mathbf{Z}}(\mathbf{t})\mathbf{W}\mathbf{W}^{\rm H}\tilde{\mathbf{Z}}^{\rm H}(\mathbf{t})\right)^{-1}.
\end{equation}

Third, by fixing $\mathbf{P}$ and $\{\mathbf{Q}_U, \mathbf{Q}_E\}$, the problem for solving $\mathbf{W}$ is equivalent to

\vspace{-1 em}
\begin{subequations}  \small \label{eq::subProblem1_FD_trans}
	\begin{align}
		\max_{\mathbf{W}} \quad &{\rm Tr}\left(\mathbf{Q}_U \mathbb{E}(\mathbf{P}, \mathbf{W})\right) + {\rm Tr}\!\left(\mathbf{Q}_E\left(\mathbf{I}_{L_E} \!+ \! \tilde{\mathbf{Z}}(\mathbf{t})\mathbf{W}\mathbf{W}^{\rm H}\right)\tilde{\mathbf{Z}}^{\rm H}(\mathbf{t})\!\right) \\
		 {\rm s.t.}~ &
		{\rm Tr}\left(\mathbf{W}\mathbf{W}^{\rm H}\right) \leq P_B,
	\end{align}
\end{subequations}
which is a second-order cone programming (SOCP) problem. Thus, problem \eqref{eq::subProblem1_FD_trans} is convex, and an optimal solution can be obtained. Nevertheless, the considered MA-aided near-field MIMO communication system typically involves a large number of antennas. This suggests that numerically solving problem \eqref{eq::subProblem1_FD_trans} may result in a high computational complexity. Given that problem \eqref{eq::subProblem1_FD_trans} is convex and adheres to Slater's condition, there is a strong duality between the primal and dual problems. Therefore, solving the dual problem provides the optimal solution to the primal problem \eqref{eq::subProblem1_FD_trans}. To address the constraint in problem \eqref{eq::subProblem1_FD_trans}, we introduce the Lagrange multiplier $\mu$, and the Lagrangian function w.r.t. $\mathbf{W}$ is then expressed as
\begin{equation} \small
	\begin{aligned}
		&\mathcal{L}(\mathbf{W},\mu) =  {\rm Tr}\left(\mathbf{Q}_E\left(\mathbf{I}_{L_E} + \tilde{\mathbf{Z}}(\mathbf{t})\mathbf{W}\mathbf{W}^{\rm H}\right)\tilde{\mathbf{Z}}^{\rm H}(\mathbf{t})\right) \\
		& + {\rm Tr}\left(\mathbf{Q}_U \mathbb{E}(\mathbf{P}, \mathbf{W})\right) + \mu \left({\rm Tr}\left(\mathbf{W}\mathbf{W}^{\rm H}\right) - P_B\right) \\
		& \quad\quad\quad\quad = {\rm Tr}\left(\mathbf{W}^{\rm H} \tilde{\mathbf{H}}^{\rm H}(\mathbf{t})\mathbf{P}\mathbf{Q}_U\mathbf{P}^{\rm H}\tilde{\mathbf{H}}(\mathbf{t})\mathbf{W}\right) \\
		& - {\rm Tr}\left(\mathbf{Q}_U\mathbf{P}^{\rm H}\tilde{\mathbf{H}}(\mathbf{t})\mathbf{W}\right)  - {\rm Tr}\left(\mathbf{Q}_U\mathbf{W}^{\rm H}\tilde{\mathbf{H}}^{\rm H}(\mathbf{t})\mathbf{P}\right) \\
		& + {\rm Tr}\left(\mathbf{W}^{\rm H}\tilde{\mathbf{Z}}^{\rm H}(\mathbf{t})\mathbf{Q}_E\tilde{\mathbf{Z}}(\mathbf{t})\mathbf{W}\right)  + \mu \left({\rm Tr}\left(\mathbf{W}\mathbf{W}^{\rm H}\right) - P_B\right).
	\end{aligned}
\end{equation}
Thus, the dual problem for \eqref{eq::subProblem1_FD_trans} can be expressed as

\vspace{-1 em}
\begin{subequations} \label{eq::dual_problem}
	\begin{align}
		\max_{\mu} \quad &f_1(\mu) \\
		 {\rm s.t.}~ &\mu \geq 0,
	\end{align}
\end{subequations}
where $f_1(\mu) \triangleq \min_{\mathbf{W}}\mathcal{L}(\mathbf{W},\mu)$. It is noted that \eqref{eq::dual_problem} is a linearly constrained convex quadratic optimization problem. The optimality condition for $\mathcal{L}(\mathbf{W},\mu)$ w.r.t. $\mathbf{W}$ is derived by equating its first-order derivative to zero, resulting in the closed-form solution expressed as
\begin{equation} \small \label{eq::optimalW}
	\begin{aligned}
		&\mathbf{W}^{\star}(\mu) = \Big(\tilde{\mathbf{H}}^{\rm H}(\mathbf{t})\mathbf{P}\mathbf{Q}_U\mathbf{P}^{\rm H}\tilde{\mathbf{H}}(\mathbf{t}) \\& \quad\quad\quad\quad\quad\quad\quad+ \mu\mathbf{I}_M + \tilde{\mathbf{Z}}^{\rm H}(\mathbf{t})\mathbf{Q}_E\tilde{\mathbf{Z}}(\mathbf{t})\Big)^{-1}\tilde{\mathbf{H}}^{\rm H}(\mathbf{t})\mathbf{P}\mathbf{Q}_U^{\rm H},
	\end{aligned}
\end{equation}
where the parameter $\mu$ is selected to satisfy the complementary slackness condition associated with the power constraint
\begin{equation}
	\mu \left({\rm Tr}\left(\mathbf{W}^{\star}(\mu)\mathbf{W}^{\star,\rm H}(\mu)\right) - P_B\right) = 0.
\end{equation}
Next, we examine whether $\mu = 0$ is indeed the optimal solution. If the condition ${\rm Tr}\left(\mathbf{W}^{\star}(0)\mathbf{W}^{\star,\rm H}(0)\right) - P_B \leq 0$ holds, the optimal solution is given by $\mathbf{W}^{\star}(0)$.
Otherwise, the optimal value of $\mu^{\star}$ should be positive and satisfy 
\begin{equation} \small \label{eq::positive_dualVar}
	{\rm Tr}\left(\mathbf{W}^{\star}(\mu)\mathbf{W}^{\star,\rm H}(\mu)\right) = P_B.
\end{equation}
Furthermore, it can be verified that the left-hand side of $\eqref{eq::positive_dualVar}$ decreases monotonically w.r.t. $\mu$ for $\mu \geq 0$. Consequently, $\mu^{\star}$ can be determined using a straightforward bisection search method. Finally, by substituting $\mu^{\star}$ into \eqref{eq::optimalW}, an optimal solution for $\mathbf{W}^{\star}(\mu^{\star})$ is obtained.

\subsubsection{Stage II: Hybrid Beamforming Design} 
Building upon $\mathbf{W}^{\star}$ derived in Stage I, we now proceed to address the design of the hybrid beamformer while taking hardware constraints into account. To approach the performance of achieved by using the fully-digital beamformer, we construct an optimization problem to design both the baseband digital beamformer $\mathbf{W}_D$ and analog beamformer $\mathbf{W}_A$, aiming to achieve near-optimal performance as follows

\vspace{-1 em}
\begin{subequations} \label{eq::subProblem2_HB}
	\begin{align}
		\min_{\mathbf{W}_A,\mathbf{W}_D} & \left\|\mathbf{W} - \mathbf{W}_A\mathbf{W}_D\right\|_F^2 \\
		 {\rm s.t.}~& \eqref{eq::problem_cons_unit-modulus},
	\end{align}	
\end{subequations}
which is a highly coupled quadratic problem. As such, an AO algorithm is developed to iteratively optimize $\mathbf{W}_A$ and $\mathbf{W}_D$.
On the one hand, with the fixed $\mathbf{W}_A$, the digital beamformer optimization problem can be reformulated as
\begin{equation} \label{eq::WD_solver}
	\min_{\mathbf{W}_D} \left\|\mathbf{W} - \mathbf{W}_A\mathbf{W}_D\right\|_F^2,
\end{equation}
which is a typical least-square (LS) problem. Thus, to derive the optimal digital beamformer $\mathbf{W}_D^{\star}$, we set the first-order derivative of \eqref{eq::WD_solver} w.r.t. $\mathbf{W}_D$ equal to zero, resulting in
\begin{equation} \label{eq::optimalWD}
	\mathbf{W}_D^{\star} = \left(\mathbf{W}_A^{\rm H}\mathbf{W}_A\right)^{-1}\mathbf{W}_A^{\rm H}\mathbf{W}.
\end{equation}
On the other hand, with the fixed $\mathbf{W}_D$, the analog beamformer optimization problem can be expressed as

\vspace{-1 em}
\begin{subequations} \label{eq::proble_analog}
	\begin{align}
		\min_{\mathbf{W}_A} & \left\|\mathbf{W} - \mathbf{W}_A\mathbf{W}_D\right\|_F^2 \\
		 {\rm s.t.}~& \eqref{eq::problem_cons_unit-modulus}.
	\end{align}	
\end{subequations}
Problem \eqref{eq::proble_analog} can be transformed into its equivalent form, which is given by

\vspace{-1 em}
\begin{subequations}  \small \label{eq::proble_analog_transform}
	\begin{align}
		\min_{\mathbf{W}_A} &  f_2\left({\rm vec}(\mathbf{W}_A)\right) \!\triangleq\! \left\| {\rm vec}(\mathbf{W}) \!-\! \left(\!\mathbf{W}_D^{\rm T} \otimes \mathbf{I}_N {\rm vec}(\mathbf{W}_A)\right) \right\|^2 \label{eq::proble_analog_transform_obj}\\
		 {\rm s.t.}~& \eqref{eq::problem_cons_unit-modulus}.
	\end{align}	
\end{subequations}
It is noted that problem \eqref{eq::proble_analog_transform} is non-convex due to the unit modulus constraints.
Thus, there is no general approach to optimally address problem \eqref{eq::proble_analog_transform}. 
Fortunately, \eqref{eq::proble_analog_transform_obj} is continuous and differentiable w.r.t. ${\rm vec}(\mathbf{W}_A)$. Moreover, vector $\mathbf{x} \triangleq {\rm vec}(\mathbf{W}_A)$ forms a Riemannian manifold, i.e., $\mathcal{N} = \{\mathbf{x} \in \mathbb{C}^{MN}: |\mathbf{x}_1|=|\mathbf{x}_2|=\cdots=|\mathbf{x}_{MN}|=1\}$. Therefore, we develop an efficient MO technique \cite{yu2016alternating, zargari2024riemannian} to derive a near-optimal solution. Specifically, in each iteration, the proposed method performs three main steps to find the optimal $\mathbf{W}_A$, as detailed below.

First, we need to determine the tangent space and Riemannian gradient. Specifically, the tangent space at a given point in the $t$-th iteration $\mathbf{x}^{\langle t \rangle} \in \mathcal{N}$ is given by
\begin{equation}
	\mathcal{T}_{\mathbf{x}^{\langle t \rangle}}\mathcal{N} = \left\{\mathbf{w} \in \mathbb{C}^{MN} | \Re\{\mathbf{w} \circ \mathbf{x}^{\langle t \rangle, *}\} = \mathbf{0}_{MN}\right\},
\end{equation}
where $\mathbf{w}$ denotes the tangent vector at $\mathbf{x}^{\langle t \rangle}$.
Next, the Riemannian gradient ${\rm grad} f_2(\mathbf{x}^{\langle t \rangle})$, which represents the orthogonal projection of the Euclidean gradient $\nabla f_2(\mathbf{x}^{\langle t \rangle})$ of \eqref{eq::proble_analog_transform_obj} onto the tangent space, is represented by
\begin{equation} \small \label{eq::Riemannian_gradient}
		{\rm grad} f_2(\mathbf{x}^{\langle t \rangle}) = \nabla f_2(\mathbf{x}^{\langle t \rangle}) - \Re\left\{\nabla f_2(\mathbf{x}^{\langle t \rangle}) \circ \mathbf{x}^{{\langle t\rangle},*}\right\} \circ \mathbf{x}^{\langle t \rangle},
\end{equation}
where $\nabla f_2(\mathbf{x}^{\langle t \rangle}) = 2(\mathbf{W}_D^{\rm T} \otimes \mathbf{I}_N)^{\rm H} (\mathbf{W}_D^{\rm T} \otimes \mathbf{I}_N)\mathbf{x}^{\langle t \rangle} - 2(\mathbf{W}_D^{\rm T} \otimes \mathbf{I}_N)^{\rm H}{\rm vec}(\mathbf{W})$.

Second, we determine the transport vector and search direction on $\mathcal{N}$. Specifically, the search direction is calculated by the tangent vector conjugated to ${\rm grad} f_2(\mathbf{x})$, which can be represented by
\begin{equation} \label{eq::direction}
	\mathbf{d}^{\langle t+1 \rangle} = -{\rm grad} f_2(\mathbf{x}^{\langle t+1 \rangle}) + \upsilon^{\langle t+1 \rangle} \mathcal{T}_{\mathbf{x}^{\langle t \rangle} \mapsto \mathbf{x}^{\langle t+1 \rangle}}(\mathbf{d}^{\langle t \rangle}),
\end{equation}
where $\mathbf{d}^{\langle t \rangle}$ is the search direction for $\mathbf{x}^{\langle t+1 \rangle}$. Since $\mathbf{d}^{\langle t \rangle}$ and $\mathbf{d}^{\langle t+1 \rangle}$
belong to different spaces, i.e., $\mathcal{T}_{\mathbf{x}^{\langle t \rangle}}\mathcal{N}$ and $\mathcal{T}_{\mathbf{x}^{\langle t+1 \rangle}}\mathcal{N}$, the search direction on $\mathcal{N}$ cannot be directly derived. Thus, a transport operation is applied to resolve this issue, which maps $\mathbf{d}^{\langle t \rangle}$ onto $\mathcal{T}_{\mathbf{x}^{\langle t+1 \rangle}}\mathcal{N}$. Specifically, the transport of a tangent vector $\mathbf{d}^{\langle t \rangle}$ from $\mathcal{T}_{\mathbf{x}^{\langle t \rangle}}\mathcal{N}$ to $\mathcal{T}_{\mathbf{x}^{\langle t+1 \rangle}}\mathcal{N}$ is represented by
\begin{equation} \small \label{eq::transport_operation}
	\mathcal{T}_{\mathbf{x}^{\langle t \rangle} \mapsto \mathbf{x}^{\langle t+1 \rangle}}(\mathbf{d}^{\langle t \rangle}) = \mathbf{d}^{\langle t \rangle} - \Re\{\mathbf{d}^{\langle t \rangle} \circ \mathbf{x}^{{\langle t+1 \rangle}, *} \} \circ \mathbf{x}^{\langle t+1 \rangle}. 
\end{equation}
Besides, $\upsilon^{\langle t+1 \rangle}$ represents the Polak-Ribiere parameter, which is utilized to accelerate the convergence rate \cite{pan2020intelligent}, and it can be given by
\begin{equation} \small \label{eq::Polak-Ribiere}
	\upsilon^{\langle t+1 \rangle} \!=\! \frac{\Re\{{\rm grad}^{\rm H} f_2(\mathbf{x}^{\langle t+1 \rangle})[{\rm grad} f_2(\mathbf{x}^{\langle t+1 \rangle}) - {\rm grad} f_2(\mathbf{x}^{\langle t \rangle})]\}}{{\rm grad}^{\rm H} f_2(\mathbf{x}^{\langle t \rangle}) \: {\rm grad} f_2(\mathbf{x}^{\langle t \rangle})}.
\end{equation}

Third, to ensure that the updated point remains on the manifold, a retraction step should be applied. The retraction process maps the current point $\mathbf{x}^{\langle t \rangle}$ from the tangent space onto the complex circle manifold, which is represented by
\begin{equation} \small \label{eq::retraction}
	\begin{aligned}
		&{\rm Retr}_{\mathbf{x}^{\langle t \rangle}}(\beta^{\langle t \rangle} \mathbf{d}^{\langle t \rangle}) \\
		&\triangleq \mathcal{T}_{\mathbf{x}^{\langle t \rangle}}\mathcal{N} \mapsto \mathcal{N}: \left[\beta^{\langle t \rangle} \mathbf{d}^{\langle t \rangle}\right]_i \mapsto \frac{\left[\mathbf{x}^{\langle t \rangle} + \beta^{\langle t \rangle} \mathbf{d}^{\langle t \rangle}\right]_i}{\left[\left|\mathbf{x}^{\langle t \rangle} + \beta^{\langle t \rangle} \mathbf{d}^{\langle t \rangle}\right|\right]_i},
	\end{aligned}	
\end{equation}
where $\beta^{\langle t \rangle}$ denotes the Armijo backtracking step size.

The steps of the proposed MO-based algorithm for obtaining the optimal analog beamformer, $\mathbf{W}_A^{\star}$, are outlined in Algorithm~\ref{alg_MO}, with $\epsilon_1$ denoting the convergence accuracy. Besides, according to \cite{alhujaili2019transmit}, Algorithm~\ref{alg_MO} is guaranteed to converge. 

\begin{algorithm}[t] \small
	\caption{MO Algorithm for Problem \eqref{eq::proble_analog_transform}}
	\label{alg_MO}
	\begin{algorithmic}[1]
		\STATE \textbf{Initialize} a feasible initial point $\mathbf{x}^{\langle 0 \rangle}$, compute $\mathbf{d}^{\langle 0 \rangle} = - {\rm grad} f_2(\mathbf{x}^{\langle 0 \rangle})$,  and set the iteration index $t=0$.
		\REPEAT
		\STATE Choose the Armijo line search step $\beta^{\langle t \rangle}$, and update $\mathbf{x}^{\langle t+1 \rangle}$ via the retraction mapping as defined in \eqref{eq::retraction}.
		\STATE Calculate Riemannian gradient ${\rm grad} f_2(\mathbf{x}^{\langle t+1 \rangle})$ based on \eqref{eq::Riemannian_gradient}.
		\STATE Determine transport $\mathcal{T}_{\mathbf{x}^{\langle t \rangle} \mapsto \mathbf{x}^{\langle t+1 \rangle}}(\mathbf{d}^{\langle t \rangle})$ according to \eqref{eq::transport_operation}.
		\STATE Update the Polak-Ribiere parameter $\upsilon^{\langle t+1 \rangle}$ based on \eqref{eq::Polak-Ribiere}.
		\STATE Update the conjugate direction $\mathbf{d}^{\langle t+1 \rangle}$ according to \eqref{eq::direction}.
		\STATE Set $t$ $\leftarrow$ $t+1$.
		\UNTIL $\|{\rm grad} f_2(\mathbf{x}^{\langle t \rangle}) \|_2 \leq \epsilon_1$ or $t \geq t_{\rm max}^{\rm MO}$.
		\RETURN $\mathbf{x}^{\star}$.
	\end{algorithmic}
\end{algorithm}

\subsection{MA Position Design}
Given $\mathbf{W}_D$ and $\mathbf{W}_A$, the subproblem for optimizing the MA positions is given by

\vspace{-1 em}
\begin{subequations}\small  \label{eq::subproblem2}
	\begin{align}
		\max_{\{\mathbf{t}_m\}} \quad  f_3(\mathbf{t}) &\triangleq \log_2\det\left(\mathbf{I}_{L_U} + \tilde{\mathbf{H}}(\mathbf{t})\mathbf{V}\mathbf{V}^{\rm H}\tilde{\mathbf{H}}^{\rm H}(\mathbf{t})\right)   \nonumber \\
		&- \log_2\det\left(\mathbf{I}_{L_E} + \tilde{\mathbf{Z}}(\mathbf{t})\mathbf{V}\mathbf{V}^{\rm H}\tilde{\mathbf{Z}}^{\rm H}(\mathbf{t})\right) \label{eq::subproblem2_obj}\\
		{\rm s.t.}~ & 
		\eqref{eq::problem_cons_MA}, \eqref{eq::problem_cons_MAdist},
	\end{align}
\end{subequations}
where we define $\mathbf{V} \triangleq \mathbf{W}_A\mathbf{W}_D$.
Based on Theorem 1, \eqref{eq::subproblem2_obj} can be simplified as
\begin{equation} \small \label{eq::subproblem2_transformed}
	\begin{aligned}
		&f_3(\mathbf{t}) = -2\Re\left\{{\rm Tr}\left(\mathbf{Q}_U\mathbf{V}^{\rm H}\tilde{\mathbf{H}}^{\rm H}(\mathbf{t})\mathbf{P}\right)\right\} \\
		& \!-\! {\rm Tr}\left(\mathbf{V}^{\rm H}\tilde{\mathbf{H}}^{\rm H}(\mathbf{t})\mathbf{P}\mathbf{Q}_U\mathbf{P}^{\rm H}\tilde{\mathbf{H}}(\mathbf{t})\mathbf{V}\right) 
		\!-\! {\rm Tr}\left(\mathbf{V}^{\rm H}\tilde{\mathbf{Z}}^{\rm H}(\mathbf{t})\mathbf{Q}_E\tilde{\mathbf{Z}}(\mathbf{t})\mathbf{V}\right),
	\end{aligned}
\end{equation}
where $\mathbf{P}$, $\mathbf{Q}_U$, and $\mathbf{Q}_E$ are obtained by replacing $\mathbf{W}$ with $\mathbf{V}$ in \eqref{eq::optimal_P}, \eqref{eq::optimal_QU}, and \eqref{eq::optimal_QE}, respectively.
Then, define $\mathbf{B} = \mathbf{P}\mathbf{Q}_U\mathbf{V}^{\rm H}$, $\mathbf{W}_X = \mathbf{V}\mathbf{V}^{\rm H}$, and $\mathbf{C} = \mathbf{P}\mathbf{Q}_U\mathbf{P}^{\rm H}$. Thus, \eqref{eq::subproblem2_transformed} can be further simplified as
\begin{equation} \small \label{eq::subproblem2_transformed2}
	\begin{aligned}
		f_3(\mathbf{t}) = &-2\Re\left\{{\rm Tr}\left(\mathbf{B}\tilde{\mathbf{H}}^{\rm H}(\mathbf{t})\right)\right\}  - {\rm Tr}\left(\tilde{\mathbf{H}}(\mathbf{t})\mathbf{W}_X\tilde{\mathbf{H}}^{\rm H}(\mathbf{t})\mathbf{C}\right) \\
		&- {\rm Tr}\left(\tilde{\mathbf{Z}}(\mathbf{t})\mathbf{W}_X\tilde{\mathbf{Z}}^{\rm H}(\mathbf{t})\mathbf{Q}_E\right).
	\end{aligned}
\end{equation}
To obtain a tractable solution, we adopt the BCD method to optimize the position of the $m$-th MA, $\mathbf{t}_m$, while keeping the positions of all other MAs, i.e., $\{\mathbf{t}_{\hat{m}} , m \neq \hat{m}\}$,  fixed.
Specifically, denote $\mathbf{b}_i \in \mathbb{C}^{L_U \times 1}$ as the $m$-th column of $i$-th column of $\mathbf{B}$, $1 \leq i \leq M$. Thus, the first term of \eqref{eq::subproblem2_transformed2} can be rewritten as
\begin{equation} \small \label{eq::first_term}
		2\Re\left\{{\rm Tr}\left(\mathbf{B}\tilde{\mathbf{H}}^{\rm H}(\mathbf{t})\right)\right\} \!=\! 2\Re\left\{\!\mathbf{h}^{\rm H}(\mathbf{t}_m)\mathbf{b}_m \!+\! \overbrace{\sum_{i \neq m}\mathbf{h}^{\rm H}(\mathbf{t}_i)\mathbf{b}_i}^{\rm const}\right\}.
\end{equation}
In addition, denote $\mathbf{v}_j^{\rm H} \in \mathbb{C}^{1 \times K}$, $1 \leq j \leq N$, as the $j$-th row of $\mathbf{V}$. Thus, the second term of \eqref{eq::subproblem2_transformed2} can be transformed into \eqref{eq::second_term} as shown at the top of the next page. Similarly, the third term of \eqref{eq::subproblem2_transformed2} is derived in \eqref{eq::third_term} as shown at the top of the next page. By substituting \eqref{eq::first_term}, \eqref{eq::second_term}, and \eqref{eq::third_term} into \eqref{eq::subproblem2_transformed2} and ignoring the constant terms w.r.t. $\mathbf{t}_m$, problem \eqref{eq::subproblem2} is simplified as
\begin{figure*} [ht] 
	\centering
	\vspace*{8pt} 
\begin{equation} \label{eq::second_term} \small
	\begin{aligned}
	{\rm Tr}&\left(\tilde{\mathbf{H}}(\mathbf{t})\mathbf{W}_X\tilde{\mathbf{H}}^{\rm H}(\mathbf{t})\mathbf{C}\right) = {\rm Tr}\left(\tilde{\mathbf{H}}(\mathbf{t})\mathbf{V}\mathbf{V}^{\rm H}\tilde{\mathbf{H}}^{\rm H}(\mathbf{t})\mathbf{C}\right) = {\rm Tr}\left[\left(\sum_{i=1}^{M} \mathbf{h}(\mathbf{t}_i) \mathbf{v}_i^{\rm H}\right)\left(\sum_{j=1}^{M} \mathbf{v}_j\mathbf{h}(\mathbf{t}_j)^{\rm H} \right)\mathbf{C}\right] \\
	&={\rm Tr}\left[\mathbf{h}(\mathbf{t}_m) \mathbf{v}_m^{\rm H} \mathbf{v}_m\mathbf{h}(\mathbf{t}_m)^{\rm H}\mathbf{C}\right] + {\rm Tr}\left[\mathbf{h}(\mathbf{t}_m) \mathbf{v}_m^{\rm H}\left(\sum_{i=1, i \neq m}^{M} \mathbf{v}_i\mathbf{h}(\mathbf{t}_i)^{\rm H} \right) \mathbf{C}\right] \\
	&+ {\rm Tr}\left[\left(\sum_{j=1, j \neq m}^{M} \mathbf{h}(\mathbf{t}_j) \mathbf{v}_j^{\rm H}\right)\mathbf{v}_m\mathbf{h}(\mathbf{t}_m)^{\rm H} \mathbf{C}\right] \underbrace{+{\rm Tr}\left[\left(\sum_{i=1, i \neq m}^{M} \mathbf{h}(\mathbf{t}_i) \mathbf{v}_i^{\rm H}\right)\left(\sum_{j=1, j \neq m}^{M} \mathbf{v}_j\mathbf{h}(\mathbf{t}_j)^{\rm H} \right) \mathbf{C}\right]}_{\rm const}.
	\end{aligned}
\end{equation}
	\vspace{-3 em}
\end{figure*}
\begin{figure*} [ht] 
	\centering
	\vspace*{8pt} 
	\begin{equation} \label{eq::third_term} \small
		\begin{aligned}
		&{\rm Tr}\left(\tilde{\mathbf{Z}}(\mathbf{t})\mathbf{W}_X\tilde{\mathbf{Z}}^{\rm H}(\mathbf{t})\mathbf{Q}_E\right) = {\rm Tr}\left(\tilde{\mathbf{Z}}(\mathbf{t})\mathbf{V}\mathbf{V}^{\rm H}\tilde{\mathbf{Z}}^{\rm H}(\mathbf{t})\mathbf{Q}_E\right) = {\rm Tr}\left[\left(\sum_{i=1}^{M} \mathbf{z}(\mathbf{t}_i) \mathbf{v}_i^{\rm H}\right)\left(\sum_{j=1}^{M} \mathbf{v}_j\mathbf{z}(\mathbf{t}_j)^{\rm H} \right)\mathbf{Q}_E\right] \\
		&={\rm Tr}\left[\mathbf{z}(\mathbf{t}_m) \mathbf{v}_m^{\rm H} \mathbf{v}_m\mathbf{z}(\mathbf{t}_m)^{\rm H}\mathbf{Q}_E\right] + {\rm Tr}\left[\mathbf{z}(\mathbf{t}_m) \mathbf{v}_m^{\rm H}\left(\sum_{i=1, i \neq m}^{M} \mathbf{v}_i\mathbf{z}(\mathbf{t}_i)^{\rm H} \right) \mathbf{Q}_E\right] \\
		&+ {\rm Tr}\left[\left(\sum_{j=1, j \neq m}^{M} \mathbf{z}(\mathbf{t}_j) \mathbf{v}_j^{\rm H}\right)\mathbf{v}_m\mathbf{z}(\mathbf{t}_m)^{\rm H} \mathbf{Q}_E\right] \underbrace{+{\rm Tr}\left[\left(\sum_{i=1, i \neq m}^{M} \mathbf{z}(\mathbf{t}_i) \mathbf{v}_i^{\rm H}\right)\left(\sum_{j=1, j \neq m}^{M} \mathbf{v}_j\mathbf{z}(\mathbf{t}_j)^{\rm H} \right) \mathbf{Q}_E\right]}_{\rm const}.
		\end{aligned}
	\end{equation}
	\hrulefill
\end{figure*}

\vspace{-1 em}
\begin{subequations} \label{eq::subproblem2_transformed3} \small
	\begin{align}
		\min_{\mathbf{t}_m} \quad &f_4(\mathbf{t}_m) = \mathbf{h}^{\rm H}(\mathbf{t}_m)\mathbf{D}^U_m\mathbf{h}(\mathbf{t}_m) 
		+ \mathbf{z}^{\rm H}(\mathbf{t}_m)\mathbf{D}^E_m\mathbf{z}(\mathbf{t}_m)\nonumber\\  
		&\quad\quad\quad\quad+2\Re\left\{\mathbf{h}^{\rm H}(\mathbf{t}_m)\mathbf{\mathbf{r}}_m^U\right\} +2\Re\left\{\mathbf{z}^{\rm H}(\mathbf{t}_m)\mathbf{\mathbf{r}}_m^E\right\} \label{eq::subproblem2_transformed3_obj} \\
		{\rm s.t.}~ & 
		\eqref{eq::problem_cons_MA}, \eqref{eq::problem_cons_MAdist},
	\end{align}
\end{subequations}
where $\mathbf{D}^U_m \triangleq \mathbf{v}_m^{\rm H} \mathbf{v}_m\mathbf{C}$, $\mathbf{D}^E_m \triangleq \mathbf{v}_m^{\rm H} \mathbf{v}_m\mathbf{Q}_E$, and 
\begin{equation}
	\quad\quad\quad\: \mathbf{r}_m^U = \mathbf{C}\left(\sum_{i=1, i \neq m}^{M} \mathbf{h}(\mathbf{t}_i) \mathbf{v}_i^{\rm H}\right)\mathbf{v}_m + \mathbf{b}_m,
\end{equation}
\begin{equation}
	\mathbf{r}_m^E = \mathbf{Q}_E\left(\sum_{i=1, i \neq m}^{M} \mathbf{z}(\mathbf{t}_i) \mathbf{v}_i^{\rm H}\right)\mathbf{v}_m.
\end{equation}
Note that  \eqref{eq::subproblem2_transformed3_obj} remains non-convex. Therefore, the MM algorithm is employed to obtain a suboptimal solution. Specifically, to construct the surrogate function, the subsequent lemma is presented.

\begin{lemma}
	For any given antenna position in the $t$-th iteration, $\mathbf{t}_m^{\langle t \rangle}$, we have
	\begin{equation}
		\begin{aligned}
			&\mathbf{h}^{\rm H}(\mathbf{t}_m)\mathbf{D}^I_m\mathbf{h}(\mathbf{t}_m) \leq  \mathbf{h}^{\rm H}(\mathbf{t}_m)\mathbf{\Phi}^I_m\mathbf{h}(\mathbf{t}_m) \\
			& -2\Re\left\{\mathbf{h}^{\rm H}(\mathbf{t}_m)\left(\mathbf{\Phi}^I_m - \mathbf{D}^I_m\right)\mathbf{h}(\mathbf{t}_m^{\langle t \rangle})\right\} \\
			& + \mathbf{h}^{\rm H}(\mathbf{t}_m^{\langle t \rangle})\left(\mathbf{\Phi}^I_m - \mathbf{D}^I_m\right)\mathbf{h}(\mathbf{t}_m^{\langle t \rangle}) \triangleq \Xi^I\left(\mathbf{t}_m|\mathbf{t}_m^{\langle t \rangle}\right),
		\end{aligned} 	
	\end{equation}
	where $\mathbf{\Phi}^I_m = \zeta^I_{\max}\mathbf{I}_{L_U}$, and $\zeta^I_{\max}$ denotes the maximum eigenvalue of $\mathbf{D}^I_m$, $I \in \{U, E\}$.
\end{lemma}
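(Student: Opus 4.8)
The plan is to read the claimed inequality as the standard quadratic majorizer obtained by upper-bounding the Hermitian matrix $\mathbf{D}^I_m$ by the scaled identity $\mathbf{\Phi}^I_m = \zeta^I_{\max}\mathbf{I}_{L_U}$. The first step is to confirm that $\mathbf{D}^I_m$ is Hermitian. Since $\mathbf{v}_m^{\rm H}\mathbf{v}_m = \|\mathbf{v}_m\|^2$ is a nonnegative real scalar, and since both $\mathbf{C} = \mathbf{P}\mathbf{Q}_U\mathbf{P}^{\rm H}$ and $\mathbf{Q}_E$ inherit Hermiticity (indeed positive semidefiniteness) from the WMMSE weight matrices $\mathbf{Q}_U \succ 0$ and $\mathbf{Q}_E \succ 0$ introduced in Theorem~1, the matrices $\mathbf{D}^U_m = \|\mathbf{v}_m\|^2\mathbf{C}$ and $\mathbf{D}^E_m = \|\mathbf{v}_m\|^2\mathbf{Q}_E$ are Hermitian. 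Consequently, because $\zeta^I_{\max}$ is the largest eigenvalue of $\mathbf{D}^I_m$, the residual $\mathbf{\Phi}^I_m - \mathbf{D}^I_m = \zeta^I_{\max}\mathbf{I}_{L_U} - \mathbf{D}^I_m$ is positive semidefinite.

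The core step is then to instantiate this positive semidefiniteness at the error vector between the current iterate and the anchor point, i.e., to write $\big(\mathbf{h}(\mathbf{t}_m) - \mathbf{h}(\mathbf{t}_m^{\langle t \rangle})\big)^{\rm H}\big(\mathbf{\Phi}^I_m - \mathbf{D}^I_m\big)\big(\mathbf{h}(\mathbf{t}_m) - \mathbf{h}(\mathbf{t}_m^{\langle t \rangle})\big) \geq 0$. Expanding this quadratic form, and using that $\mathbf{\Phi}^I_m - \mathbf{D}^I_m$ is Hermitian so that the two cross terms are complex conjugates and combine into a single $2\Re\{\cdot\}$ term, I would rearrange to isolate $\mathbf{h}^{\rm H}(\mathbf{t}_m)\mathbf{D}^I_m\mathbf{h}(\mathbf{t}_m)$ on the left. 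The remaining terms reproduce $\mathbf{h}^{\rm H}(\mathbf{t}_m)\mathbf{\Phi}^I_m\mathbf{h}(\mathbf{t}_m)$, the linear term $-2\Re\{\mathbf{h}^{\rm H}(\mathbf{t}_m)(\mathbf{\Phi}^I_m - \mathbf{D}^I_m)\mathbf{h}(\mathbf{t}_m^{\langle t \rangle})\}$, and the constant $\mathbf{h}^{\rm H}(\mathbf{t}_m^{\langle t \rangle})(\mathbf{\Phi}^I_m - \mathbf{D}^I_m)\mathbf{h}(\mathbf{t}_m^{\langle t \rangle})$, which is exactly $\Xi^I(\mathbf{t}_m|\mathbf{t}_m^{\langle t \rangle})$. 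Tightness at $\mathbf{t}_m = \mathbf{t}_m^{\langle t \rangle}$ follows immediately, since the error vector vanishes there, confirming that $\Xi^I$ is a legitimate MM surrogate. The identical derivation, with $\mathbf{z}(\mathbf{t}_m)$ replacing $\mathbf{h}(\mathbf{t}_m)$ and $\mathbf{I}_{L_E}$ replacing $\mathbf{I}_{L_U}$, handles the eavesdropper term governed by $\mathbf{D}^E_m$.

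I expect no substantive obstacle here: this is the textbook quadratic upper bound underpinning the MM method, and the only bookkeeping is to establish the Hermitian (and hence PSD) structure of $\mathbf{D}^I_m$ and to track the sign and the $2\Re\{\cdot\}$ merge of the complex cross terms. The one point I would double-check is dimensional consistency between the two cases, so that writing $\mathbf{\Phi}^I_m = \zeta^I_{\max}\mathbf{I}_{L_U}$ is understood to mean the identity of the appropriate size ($L_U$ for the user, $L_E$ for the eavesdropper) matching the response vectors $\mathbf{h}$ and $\mathbf{z}$, ensuring the same expansion applies verbatim to both quadratic forms in $f_4(\mathbf{t}_m)$.
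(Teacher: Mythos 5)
Your proof is correct and is exactly the standard quadratic-majorization argument (expanding $(\mathbf{h}(\mathbf{t}_m)-\mathbf{h}(\mathbf{t}_m^{\langle t \rangle}))^{\rm H}(\mathbf{\Phi}^I_m-\mathbf{D}^I_m)(\mathbf{h}(\mathbf{t}_m)-\mathbf{h}(\mathbf{t}_m^{\langle t \rangle}))\geq 0$ using Hermiticity of $\mathbf{D}^I_m$ and $\mathbf{\Phi}^I_m\succeq\mathbf{D}^I_m$) that the paper invokes by citing the reference rather than reproducing. Your side remarks are also well taken: the identity should be $\mathbf{I}_{L_E}$ (and the response vector $\mathbf{z}$) in the eavesdropper case, which is a notational slip in the lemma statement, not a gap in the argument.
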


\begin{proof}
	Please refer to \cite{song2015optimization}.
\end{proof}

Based on Lemma 1, the objective in \eqref{eq::subproblem2_transformed3} can be simplified by eliminating the constant terms, yielding
\begin{equation}	
	\begin{aligned}
		f_4(\mathbf{t})	& \leq \Xi^U\left(\mathbf{t}_m|\mathbf{t}_m^{\langle t \rangle}\right) + \Xi^E\left(\mathbf{t}_m|\mathbf{t}_m^{\langle t \rangle}\right) \\
			&  + 2\Re\left\{\mathbf{h}^{\rm H}(\mathbf{t}_m)\mathbf{\mathbf{r}}_m^U\right\} + 2\Re\left\{\mathbf{z}^{\rm H}(\mathbf{t}_m)\mathbf{r}_m^E\right\} \triangleq f_5(\mathbf{t}).
	\end{aligned}
\end{equation}
Next, since we have $\mathbf{h}^{\rm H}(\mathbf{t}_m)\mathbf{h}(\mathbf{t}_m) = \sum_{l=1}^{L_U}g^2_{l,m}$, and $\mathbf{h}^{\rm H}(\mathbf{t}_m)\mathbf{\Phi}^I_m\mathbf{h}(\mathbf{t}_m) = \sum_{l=1}^{L_U}g^2_{l,m}\lambda^I_{\max}$, which is a constant. Thus, $f_5(\mathbf{t})$ can be simplified as
\begin{equation} \label{eq::subproblem2_transformed4}
	\begin{aligned} 
		& f_5(\mathbf{t}_{m})= 2\Re\left\{\mathbf{h}^{\rm H}(\mathbf{t}_{m})\boldsymbol{\tau}_{m}^U+\mathbf{z}^{\rm H}(\mathbf{t}_{m})\boldsymbol{\tau}_{m}^E\right\} + {\rm const},
	\end{aligned}
\end{equation}
where
\begin{equation} \label{eq::tauU}
	\boldsymbol{\tau}_{m}^U = \mathbf{\mathbf{r}}_{m}^U - \left(\mathbf{\Phi}^U_{m} - \mathbf{D}^U_{m}\right)\mathbf{h}(\mathbf{t}_{m}^{\langle t \rangle}),
\end{equation}
and 
\begin{equation} \label{eq::tauE}
	\boldsymbol{\tau}_{m}^E = \mathbf{\mathbf{r}}_{m}^E - \left(\mathbf{\Phi}^E_{m} - \mathbf{D}^E_{m}\right)\mathbf{h}(\mathbf{t}_{m}^{\langle t \rangle}).
\end{equation}
Although the transformed \eqref{eq::subproblem2_transformed4} is more tractable than its original form in \eqref{eq::subproblem2_transformed}, $f_5(\mathbf{t}_{m})$ remains non-convex w.r.t. $\mathbf{t}_m$. As such, the second-order Taylor expansion of $f_5(\mathbf{t}_{m})$ is utilized to construct an upper bound surrogate function.
Specifically, by introducing $\delta_m >0$ and ensuring that $\delta_m\mathbf{I}_2 \succeq \nabla^2f_5(\mathbf{t}_{m})$, we have
\begin{equation} \small \label{eq::subproblem2_transformed5}
	\begin{aligned}
		f_5(\mathbf{t}_{m}) &\leq f_5(\mathbf{t}_{m}^{\langle t \rangle}) + \nabla f_5(\mathbf{t}_{m}^{\langle t \rangle})^{\rm T}(\mathbf{t}_{m} - \mathbf{t}_{m}^{\langle t \rangle}) \\
		& + \frac{\delta_{m}}{2}(\mathbf{t}_{m} - \mathbf{t}_{m}^{\langle t \rangle})^{\rm T}(\mathbf{t}_{m} - \mathbf{t}_{m}^{\langle t \rangle}) \\
		& = \frac{\delta_{m}}{2}\mathbf{t}_{m}^{\rm T}\mathbf{t}_{m} + \left(\nabla f(\mathbf{t}_{m}^{\langle t \rangle}) - \delta_{m}\mathbf{t}_{m}^{\langle t \rangle}\right)^{\rm T}\mathbf{t}_{m} +{\rm const},
	\end{aligned}
\end{equation}
where $\nabla f_5(\mathbf{t}_{m}^{\langle t \rangle})$ and $\nabla^2 f_5(\mathbf{t}_{m}^{\langle t \rangle})$ are derived in Appendix B. Besides, the construction of $\delta_{m}$ can be constructed as in \cite{maY2024movable}.

Although \eqref{eq::subproblem2_transformed5} is convex quadratic w.r.t. $\mathbf{t}_{m}$, the constraint \eqref{eq::problem_cons_MAdist} is non-convex. Thus, we relax \eqref{eq::problem_cons_MAdist} by deriving its first-order Taylor expansion as
\begin{equation}  \label{eq::problem_cons_MAdist_transformed}
	\begin{aligned}
		\left\|\mathbf{t}_m - \mathbf{t}_{\hat{m}}\right\|_2 &\geq \left\|\mathbf{t}_m^{\langle t \rangle} - \mathbf{t}_{\hat{m}}\right\|_2 \\&+ \left(\!\nabla\left\|\mathbf{t}_m - \mathbf{t}_{\hat{m}}\right\|_2|_{\mathbf{t}_m = \mathbf{t}_m^{\langle t \rangle}}\right)^{\rm T}(\mathbf{t}_m - \mathbf{t}_m^{\langle t \rangle})\\
		&= \frac{(\mathbf{t}_m^{\langle t \rangle} - \mathbf{t}_{\hat{m}})^{\rm T}}{\left\|\mathbf{t}_m^{\langle t \rangle} - \mathbf{t}_{\hat{m}}\right\|_2}(\mathbf{t}_m - \mathbf{t}_{\hat{m}}).
	\end{aligned}
\end{equation}

Finally, by substituting \eqref{eq::subproblem2_transformed5} and \eqref{eq::problem_cons_MAdist_transformed} into \eqref{eq::subproblem2_transformed3}, problem \eqref{eq::subproblem2_transformed3} can be reformulated as

\vspace{-1 em}
\begin{subequations} \label{eq::subproblem2_transformed_final}
	\begin{align}
		\min_{\mathbf{t}_m} \quad &\frac{\delta_{m}}{2}\mathbf{t}_{m}^{\rm T}\mathbf{t}_{m} + \left(\nabla f(\mathbf{t}_{m}^{\langle t \rangle}) - \delta_{m}\mathbf{t}_{m}^{\langle t \rangle}\right)^{\rm T}\mathbf{t}_{m} \\
		{\rm s.t.}\quad & \mathbf{t}_m \in \mathcal{C},\\
		& \frac{(\mathbf{t}_m^{\langle t \rangle} - \mathbf{t}_{\hat{m}})^{\rm T}}{\left\|\mathbf{t}_m^{\langle t \rangle} - \mathbf{t}_{\hat{m}}\right\|_2}(\mathbf{t}_m - \mathbf{t}_{\hat{m}}) \geq d_{\min},
	\end{align}
\end{subequations}
which is convex and can be optimally addressed by using interior point methods or standard convex optimization tools, e.g., CVX. Building on the preceding discussions, the comprehensive steps of the MM algorithm are outlined in Algorithm~\ref{alg::MM}.

\begin{algorithm}[t] \small
	\caption{MM Algorithm for Problem \eqref{eq::subproblem2_transformed3}}
	\label{alg::MM}
	\begin{algorithmic}[1]
		\STATE \textbf{Initialize} a feasible antenna position $\mathbf{t}_m^{\langle 0 \rangle}$, the convergence threshold $\epsilon_2$, the iteration index $t = 0$, and the maximum number of iterations $t_{\max}$.
		\REPEAT
		\STATE Calculate $\mathbf{\Phi}^I_m, I \in \{U, E\}$ based on Lemma 1.
		\STATE Calculate $\boldsymbol{\tau}_{m}^U, I \in \{U, E\}$ according to \eqref{eq::tauU} and \eqref{eq::tauE}.
		\STATE Calculate $\nabla f_5(\mathbf{t}_{m}^{\langle t \rangle})$ and $\nabla^2 f_5(\mathbf{t}_{m}^{\langle t \rangle})$  \eqref{eq::first-order} and \eqref{eq::second-order}.
		\STATE Obtain $\mathbf{t}_m^{(t+1)}$ by solving problem  \eqref{eq::subproblem2_transformed_final}.
		\STATE Set $t$ $\leftarrow$ $t+1$.
		\UNTIL $|f_4(\mathbf{t}_m^{\langle t \rangle}) - f_4(\mathbf{t}_m^{\langle t-1 \rangle})| / |f_4(\mathbf{t}_m^{\langle t-1 \rangle})| \leq \epsilon_2$ or $t \geq t_{\rm max}^{\rm MM}$.
		\RETURN $\mathbf{t}_m^{\star}$.
	\end{algorithmic}
\end{algorithm}

\subsection{Overall Solution}\label{SecC}

\begin{algorithm}[t] \small
	\caption{Overall Algorithm for Problem  \eqref{eq::problem}}
	\label{alg::overall}
	\begin{algorithmic}[1]
		\STATE  \textbf{Initialize}  $\mathbf{W}^{\langle 0 \rangle}$, $\mathbf{W}_D^{\langle 0 \rangle}$, $\mathbf{W}_A^{\langle 0 \rangle}$, $\mathbf{t}^{\langle 0 \rangle}$ the convergence accuracy $\epsilon_3$,  iteration index $T = 0$, and the maximum number  of iterations $T_{\rm max}$, calculate the value of \eqref{eq::problem_obj} as $F(\mathbf{W}_D^{\langle 0 \rangle}, \mathbf{W}_A^{\langle 0 \rangle},\mathbf{t}^{\langle 0 \rangle})$.
		\REPEAT
			\STATE Update the fully-digital beamformer $\mathbf{W}^{\langle T \rangle}$ as in \eqref{eq::optimalW}.
			\STATE Approximate the digital beamformer $\mathbf{W}_D^{\langle T \rangle}$ as in \eqref{eq::optimalWD}, and approximate the analog beamformer $\mathbf{W}_A^{\langle T \rangle}$ by Algorithm~\ref{alg_MO}.
			\FOR{$m=1$ to $M$}  
			\STATE Update  $\mathbf{t}_m^{\langle T \rangle}$ by  Algorithm~\ref{alg::MM}.
			\ENDFOR
			\STATE Set $T \leftarrow T + 1$.
		\UNTIL $\frac{\|F(\mathbf{W}_D^{\langle T \rangle}, \mathbf{W}_A^{\langle T \rangle},\mathbf{t}^{\langle T \rangle}) - F(\mathbf{W}_D^{\langle T-1 \rangle}, \mathbf{W}_A^{\langle T-1 \rangle},\mathbf{t}^{\langle T-1 \rangle})\|}{F(\mathbf{W}_D^{\langle T-1 \rangle}, \mathbf{W}_A^{\langle T-1 \rangle},\mathbf{t}^{\langle T-1 \rangle})} \leq \epsilon_3$ or $T > T_{\rm max}$.
		\RETURN $\mathbf{W}_D^{\star}$, $\mathbf{W}_A^{\star}$, and $\mathbf{t}^{\star}$.
	\end{algorithmic}
\end{algorithm}

We summarize the detailed procedures of the proposed overall solution in Algorithm~\ref{alg::overall}.

\subsubsection{Convergence Analysis}
Each subproblem is addressed locally and/or optimally in Algorithm~\ref{alg::overall}, which guarantees that \eqref{eq::problem_obj} does not decrease over iterations. Besides, due to the limitation imposed by finite transmission power in \eqref{eq::problem_obj}, Algorithm~\ref{alg::overall} ensures to be converged.

\subsubsection{Computational Complexity}
Then, we analyze the complexity of Algorithm~\ref{alg::overall} as follows.
First, in step 3, the inverse matrix operation and bisection methods are employed, with computational complexities of $\mathcal{O}(M^3)$ and $\mathcal{O}\left(\log_2(\frac{1}{\epsilon})\right)$, respectively, where $\epsilon$ represents the iteration accuracy. As such, the overall computational complexity for determining $\mathbf{W}$ is obtained as $\mathcal{O}\left(t_{1,\max}\log_2(\frac{1}{\epsilon})M^3\right)$, where $t_{1,\max}$ denotes the total number of iterations for solving problem \eqref{eq::subProblem11_FD}. Second, in step 4, the computational complexity primarily depends on the MO algorithm in Algorithm~\ref{alg_MO}, which has a complexity of $\mathcal{O}(M^2)$. Thus, the total complexity for determining the hybrid beamformer is given by $\mathcal{O}\left(t_{2,\max}M^2\right)$, where $t_{2,\max}$ denotes the total number of iterations for solving problem \eqref{eq::subProblem2_HB}.
Finally, in steps 5 to 7, the MM algorithm is adopted to optimize the positions of MAs. Specifically, in Algorithm~\ref{alg::MM}, in step 3, the complexities for calculating the maximum eigenvalues of $\mathbf{D}^U_m$ and $\mathbf{D}^U_m$ are given by $\mathcal{O}(L_U^3)$ and $\mathcal{O}(L_E^3)$, respectively. The complexity for calculating $\boldsymbol{\tau}_{m}^U, I \in \{U, E\}$ is obtained as $\mathcal{O}(L_U^2) + \mathcal{O}(L_E^2)$. The complexities for calculating $\nabla f_5(\mathbf{t}_{m})$, $\nabla^2 f_5(\mathbf{t}_{m})$, and $\delta_m$ are given by $\mathcal{O}\left(L_U+L_E\right)$, $\mathcal{O}\left(L_U+L_E\right)$, and $\mathcal{O}(1)$, respectively. Besides, updating $\mathbf{t}_m^{\langle t+1 \rangle}$ by solving problem \eqref{eq::subproblem2_transformed_final} incurs a complexity of $\mathcal{O}\left(M^{1.5}\ln(\frac{1}{\varepsilon})\right)$, where $\varepsilon$ represents the accuracy for the interior-point method. 
As such, the total complexity for obtaining $\mathbf{t}_m$ is given by $\mathcal{O}\left(L_U^3 + L_E^3 + t_{3,\max} \left(L_U^2 + L_E^2\right) + t_{4,\max}M^{1.5}\ln(\frac{1}{\varepsilon})\right)$, where $t_{1,\max}$ and $t_{2,\max}$ denote the maximum number of iterations for steps 2-8 and the maximum number of iterations for solving problem \eqref{eq::subproblem2_transformed_final}, respectively.
Therefore, the overall complexity of Algorithm~\ref{alg::overall} is obtained as $\mathcal{O}\Big(T_{\max}\big(t_{1,\max}\log_2(\frac{1}{\epsilon})M^3 + t_{2,\max}M^2 + L_U^3 + L_E^3 + t_{3,\max} \left(L_U^2 + L_E^2\right) + t_{4,\max}M^{1.5}\ln(\frac{1}{\varepsilon})\big)\Big)$, where $T_{\max}$ represents the total number of iterations for Algorithm~\ref{alg::overall}.

\section{Simulation Results}
In this section, we provide simulation results of the considered MA-enabled secure near-field MIMO systems. 

\subsection{Simulation Setup}
The user and eavesdropper are each equipped with uniform planar arrays (UPAs) based on FPA, having $L_U = 4$ and $L_E = 4$, respectively, while the BS is equipped with $M=64$ MAs. In a 3D Cartesian coordinate system, the BS is positioned on the $y$-$O$-$z$-plane, with its central point at the origin $(0, 0, 0)$. The user and eavesdropper are situated on the $x$-$O$-$y$-plane. For convenience, the positions of the BS, user, and eavesdropper are described in terms of polar coordinates. Specifically, this paper addresses a challenging case in which the eavesdropper lies between the BS and user, sharing the same azimuth angles. Unless stated otherwise, the user and eavesdropper are positioned at $(15 {\rm m}, \pi/4)$ and $(10 {\rm m}, \pi/4)$, respectively.
In addition, we assume equal noise variances across both the user and eavesdropper, i.e., $\sigma^2_U = \sigma^2_E = -80$ dBm. The moving region for the MA is modeled as a 2D square with size $\mathcal{C} = [-\frac{A}{2},\frac{A}{2}] \times [-\frac{A}{2},\frac{A}{2}]$, where $A = 100\lambda$ and $\lambda = 0.01$ m denotes the wavelength. Besides, the minimum spacing between MAs is $d_{\min} = \frac{\lambda}{2}$.
Unless otherwise stated, the following default simulation parameters are adopted. Specifically, $P_B = 20$ dBm, $N = 4$, $K = 2$, $\epsilon_1 = \epsilon_2 = \epsilon_3 = 10^{-6}$, $t_{\max}^{\rm MO} = t_{\max}^{\rm MM} = T_{\max} = 300$. 
All the results are averaged over 500 Monte Carlo simulations of independent channel realizations.

To comprehensively evaluate the superiority of the proposed scheme (denoted as ``\textbf{Proposed}''), we compared it against several benchmarks as outlined below.
\begin{itemize}
	\item \textbf{Fully-digital (FD):} The BS employs a fully-digital beamformer $\mathbf{W}$, which regards as a theoretical upper bound for the proposed approach.
	\item \textbf{Random position antenna (RPA):} The MAs at the BS are randomly distributed within the moving region $\mathcal{C}$.
	\item \textbf{FPA with full aperture (FPAF):} The BS employs a conventional FPA-based UPA, with antennas arranged to achieve the maximum possible aperture of $A \times A$.
	\item \textbf{FPA with half-wavelength antenna spacing (FPAH):} The BS uses a traditional FPA-based UPA, where the antennas are arranged with half-wavelength spacing.
	\item \textbf{Far-field (FF):} The channel between the BS and the user/eavesdropper is characterized by a traditional far-field channel model. Antenna position optimization is conducted according to the far-field channel model, whereas performance evaluation is carried out using the true near-field channel model, as defined in \eqref{eq::NFRV}.
\end{itemize}

\subsection{Convergence Evaluation of Algorithm~\ref{alg::overall}}

\begin{figure}[t]
	\begin{center}
		\includegraphics[width= 3.7 in]{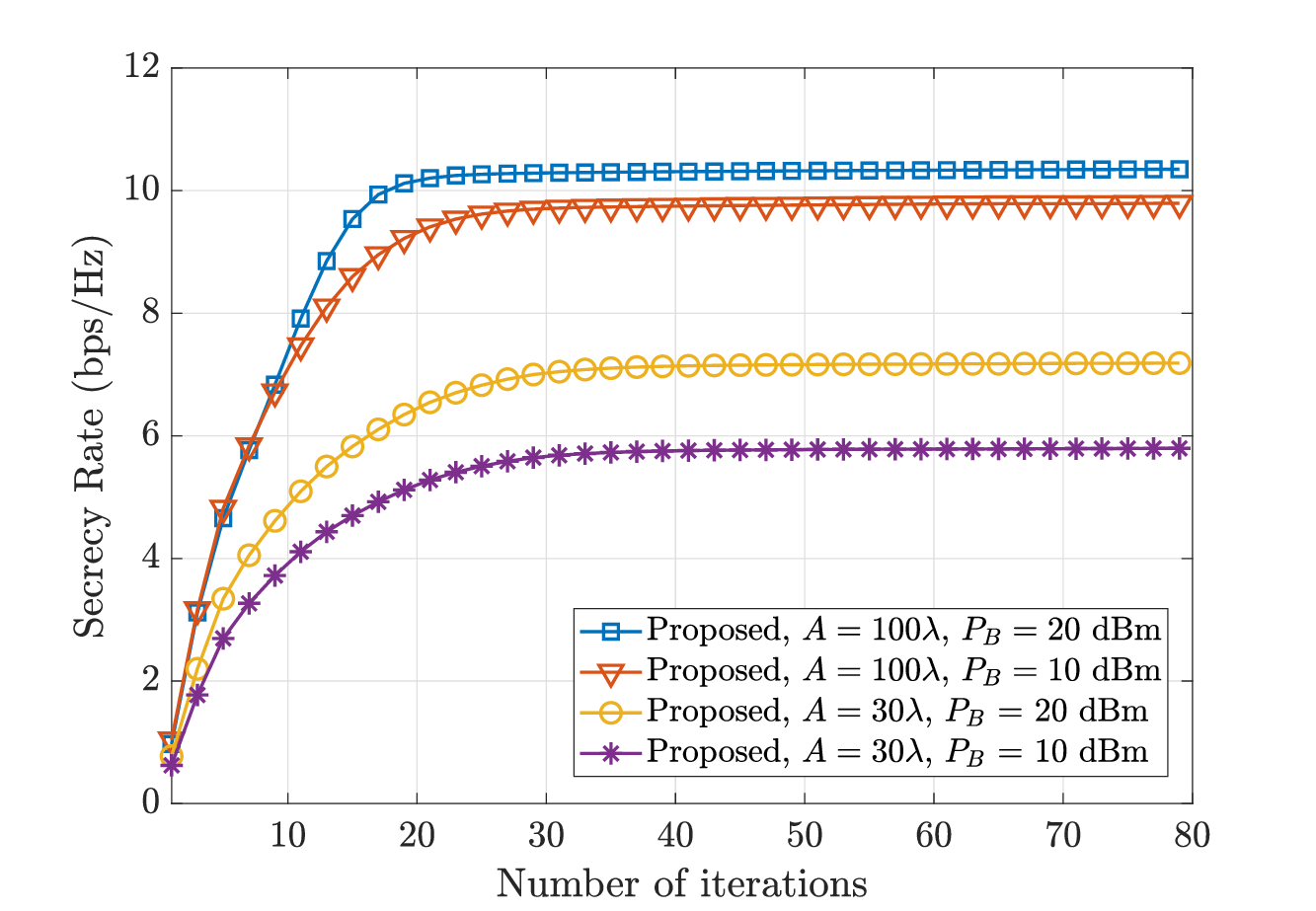}
		\caption{The convergence behavior of  Algorithm~\ref{alg::overall}.} \label{fig::convergence}
	\end{center}
	\vspace{-1.5em}
\end{figure}
Herein, the convergence performance of the proposed overall Algorithm~\ref{alg::overall} is evaluated in Fig. \ref{fig::convergence}. It is evident that regardless of the moving region size or the transmit power budget, the achievable secrecy rate consistently increases and stabilizes after approximately 30 iterations. Specifically, with $A = 100\lambda$ and $P_B = 20$ dBm, the secrecy rate improves from $1.042$ bps/Hz to $10.29$ bps/Hz, highlighting the effectiveness of the proposed solution in enhancing the PLS of the MA-aided near-field system. Moreover, it can be observed that the impact of adjusting the size of the moving region is greater than that of adjusting the transmit power. This is because insufficient transmit power can be compensated by leveraging the excess spatial DoFs, thereby improving the PLS of the system.

\subsection{Beam Focusing in MA-Aided Near-Field PLS}

%
%

\begin{figure*}[ht]
	\begin{center}
			\begin{minipage}[b]{0.3\linewidth}
					\centering
					\includegraphics[width= 2.36 in]{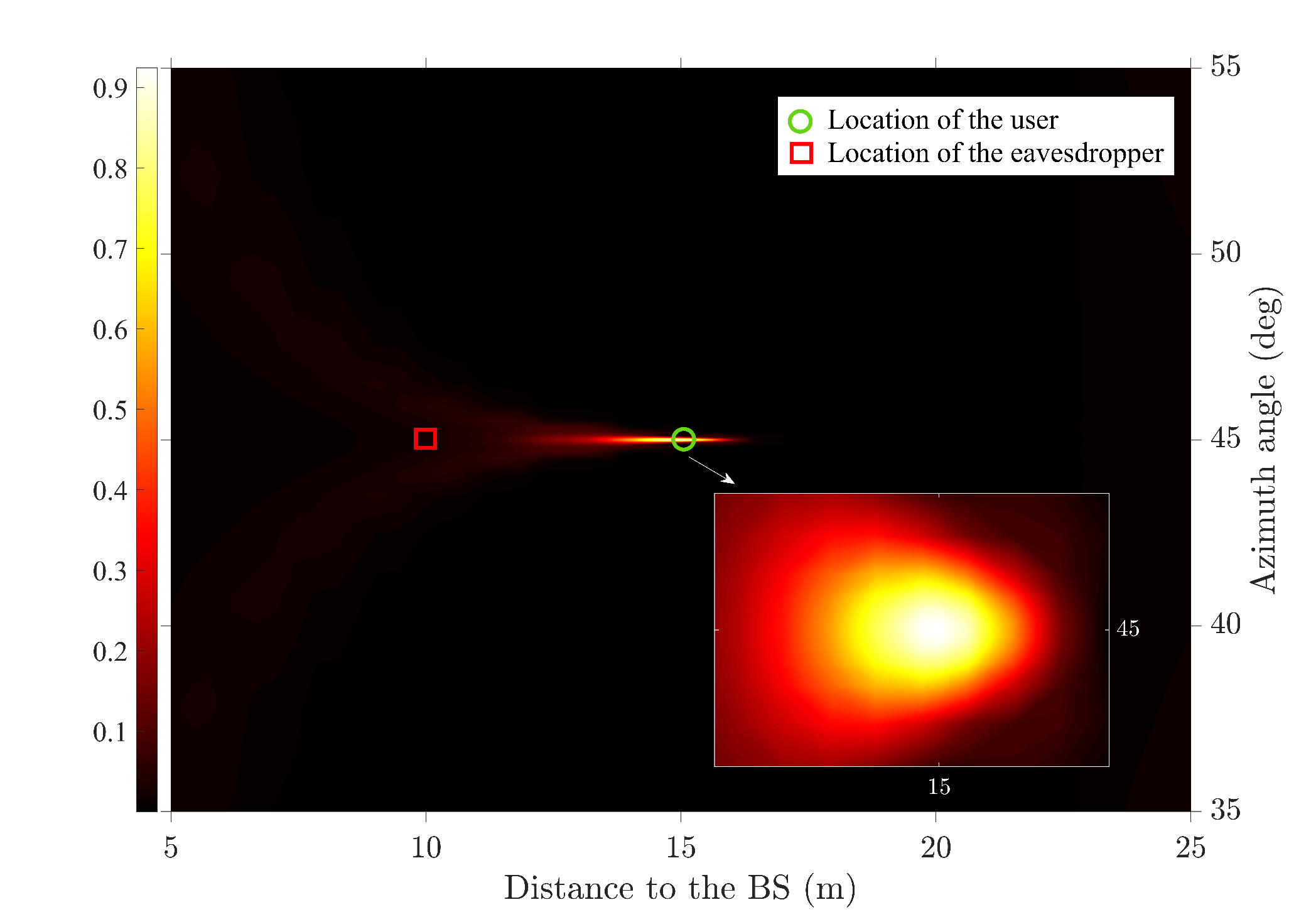}
					\caption{Normalized heat-maps for beam focusing when $A = 100\lambda$.} \label{fig::heatmap_100lambda}
				\end{minipage}
			\quad 
			\begin{minipage}[b]{0.3\linewidth}
					\centering
					\includegraphics[width= 2.4 in]{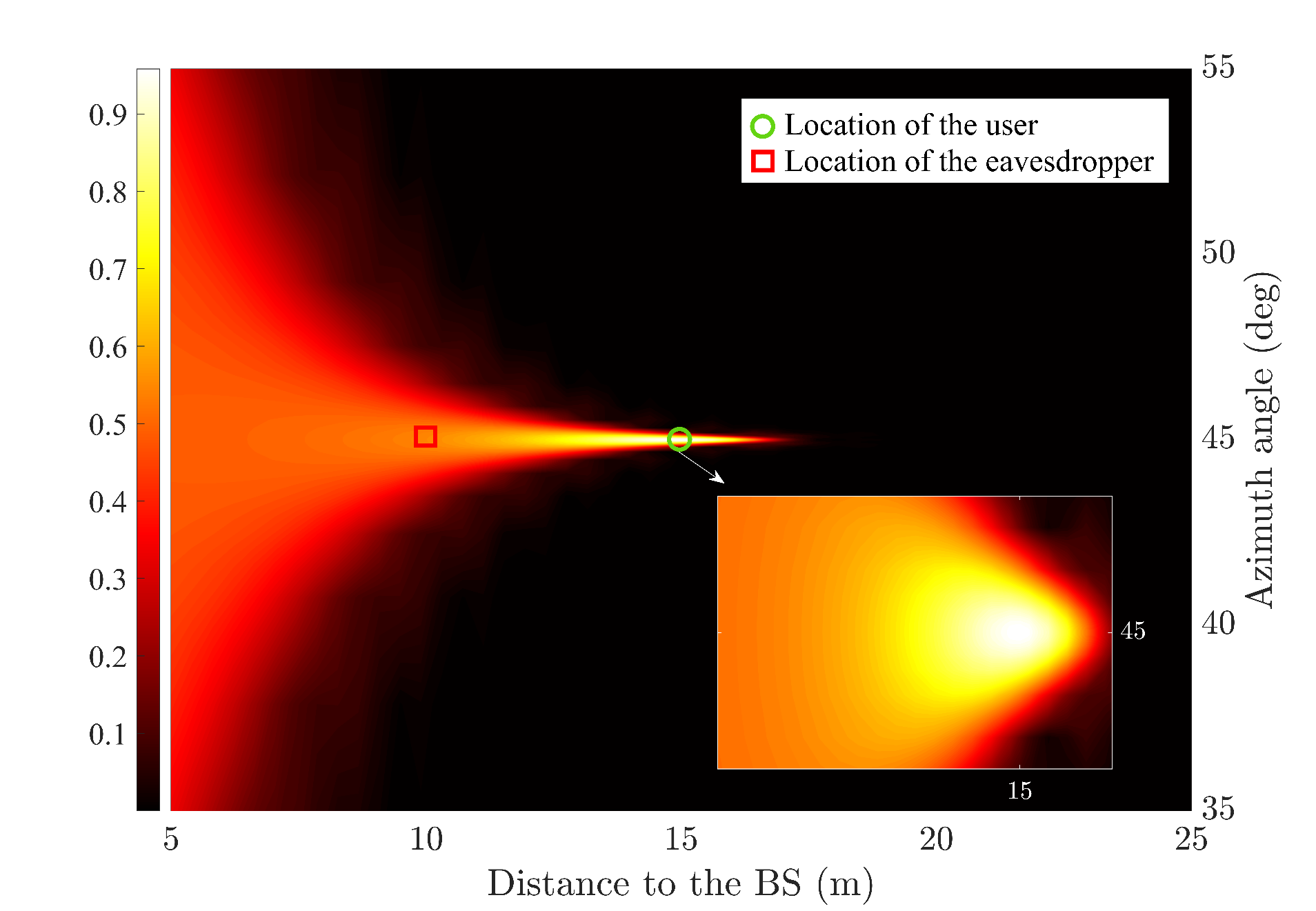}
					\caption{Normalized heat-maps for beam focusing when $A = 30\lambda$.} \label{fig::heatmap_30lambda}
				\end{minipage}
			\quad
			\begin{minipage}[b]{0.3\linewidth}
				\centering
				\includegraphics[width= 2.35 in]{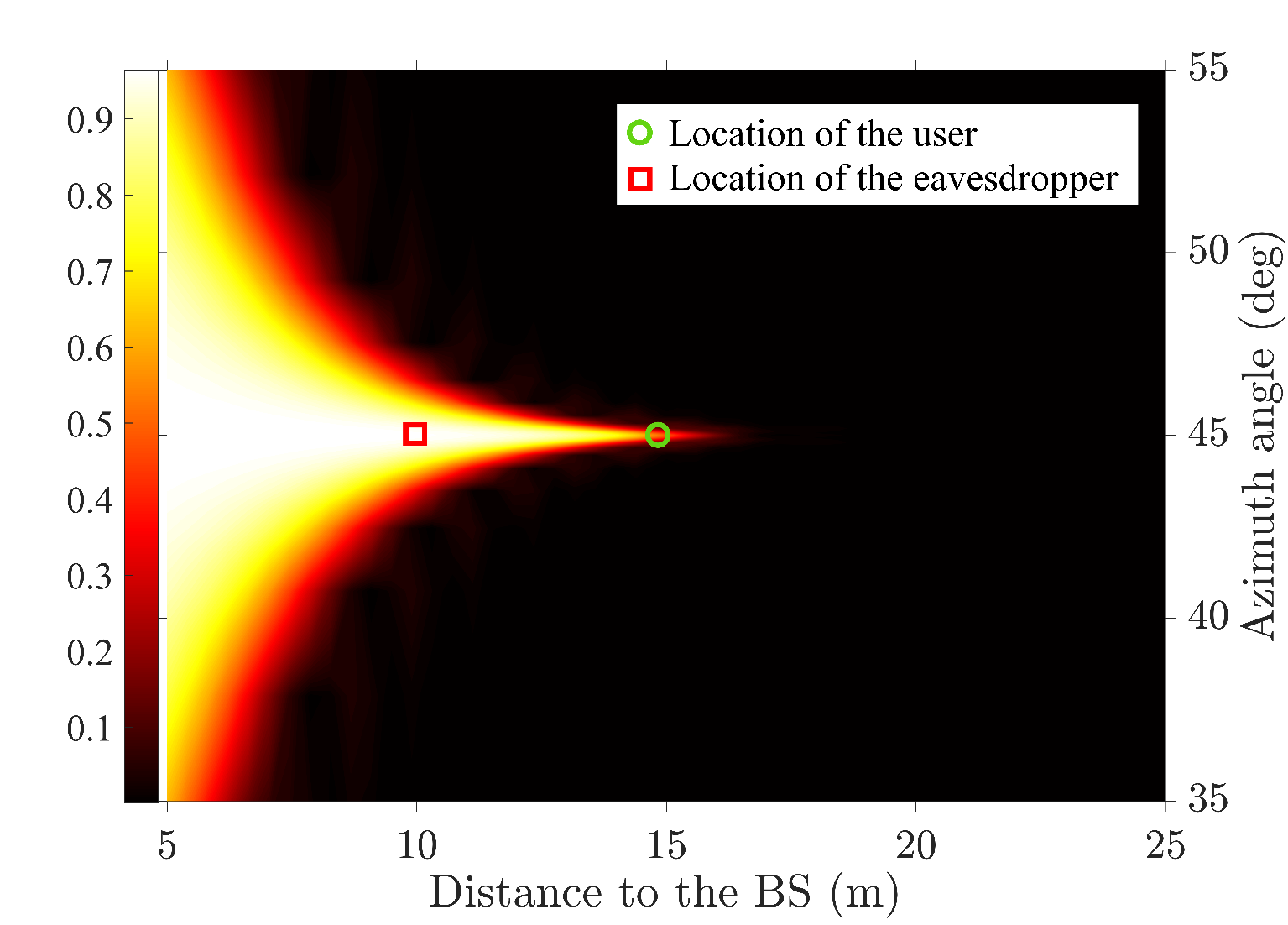}
				\caption{Normalized heat-maps for beam focusing when $A = 10\lambda$.} \label{fig::heatmap_10lambda}
			\end{minipage}
		\end{center}
	\vspace{-1.5em}
\end{figure*}

Figs.~\ref{fig::heatmap_100lambda}-\ref{fig::heatmap_10lambda} illustrate the beam focusing patterns in the considered near-field scenarios, where positions of MAs are optimized via the proposed Algorithm~\ref{alg::MM}. As shown in Fig.~\ref{fig::heatmap_100lambda}, when $A = 100\lambda$, the beam is accurately concentrated on the intended legitimate user, and the beam exhibits an exceptionally narrow main lobe. This configuration minimizes information leakage to the eavesdropper and facilitates secure transmission through joint optimization of both the angle and distance domains. In addition, as demonstrated in Figs.~\ref{fig::heatmap_30lambda}-\ref{fig::heatmap_10lambda}, reducing the size of the moving region leads to an expansion of the beam's primary lobes, attributed to the reduction in the maximum aperture achievable by the MA array. This expansion results in increased information leakage to the eavesdropper at undesirable locations, thus compromising the overall PLS performance.

\subsection{Performance Analysis Compared to Baseline Schemes}

\begin{figure}[t]
	\begin{center}
		\includegraphics[width= 3.7 in]{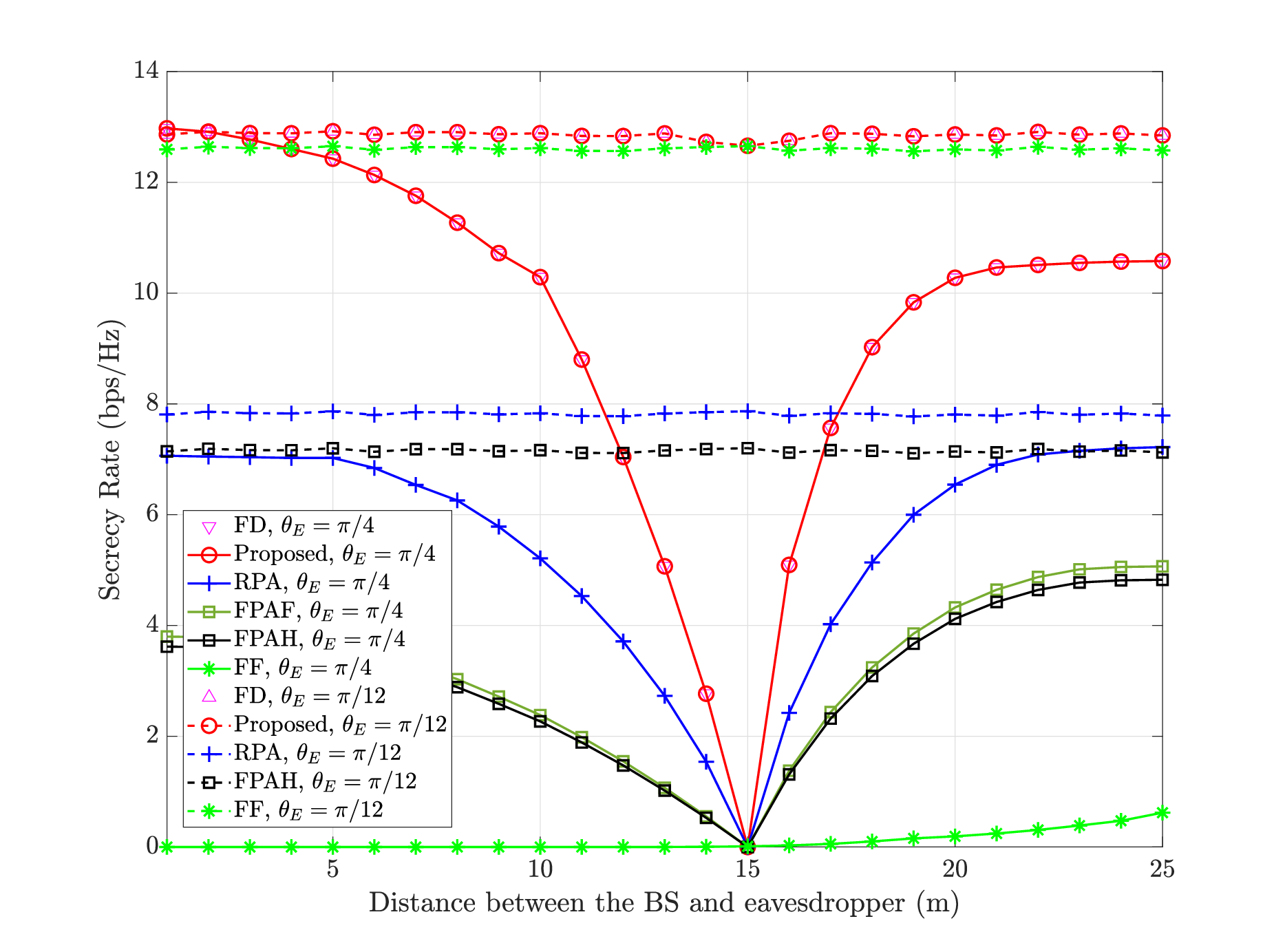}
		\caption{Secrecy rate versus distance between the BS and eavesdropper.} \label{fig::SR_vs_distance}
	\end{center}
	\vspace{-1.5em}
\end{figure}
In Fig.~\ref{fig::SR_vs_distance}, we compare the secrecy rate of the proposed scheme and baselines versus the distance between the BS and eavesdropper under different azimuth angles of the eavesdropper. 
When the eavesdropper is aligned with the user at the azimuth angle of $\pi/4$, the proposed MA-aided approach consistently outperforms RPA, FPAF, and FPAH schemes in terms of secure transmission performance. Specifically, the secrecy rate diminishes to zero as the eavesdropper’s position shifts from $(0 {\rm m}, \pi/4)$ to the user’s position $(15 {\rm m}, \pi/4)$, and subsequently increases as the location of the eavesdropper further moves further away the user. Furthermore, the optimized hybrid beamforming technique yields a secrecy rate that is comparable to that of fully-digital beamformers. However, for the FF scheme, the secrecy rate approaches zero when the eavesdropper is situated nearer to the BS than the user.
When the eavesdropper is located at the azimuth angle $\pi/12$, distinct from the user’s $\pi/4$, the secrecy rate remains largely unaffected by the distance between the BS and the eavesdropper for all schemes, including those in far-field environments.

\begin{figure}[t]
	\begin{center}
		\includegraphics[width= 3.7 in]{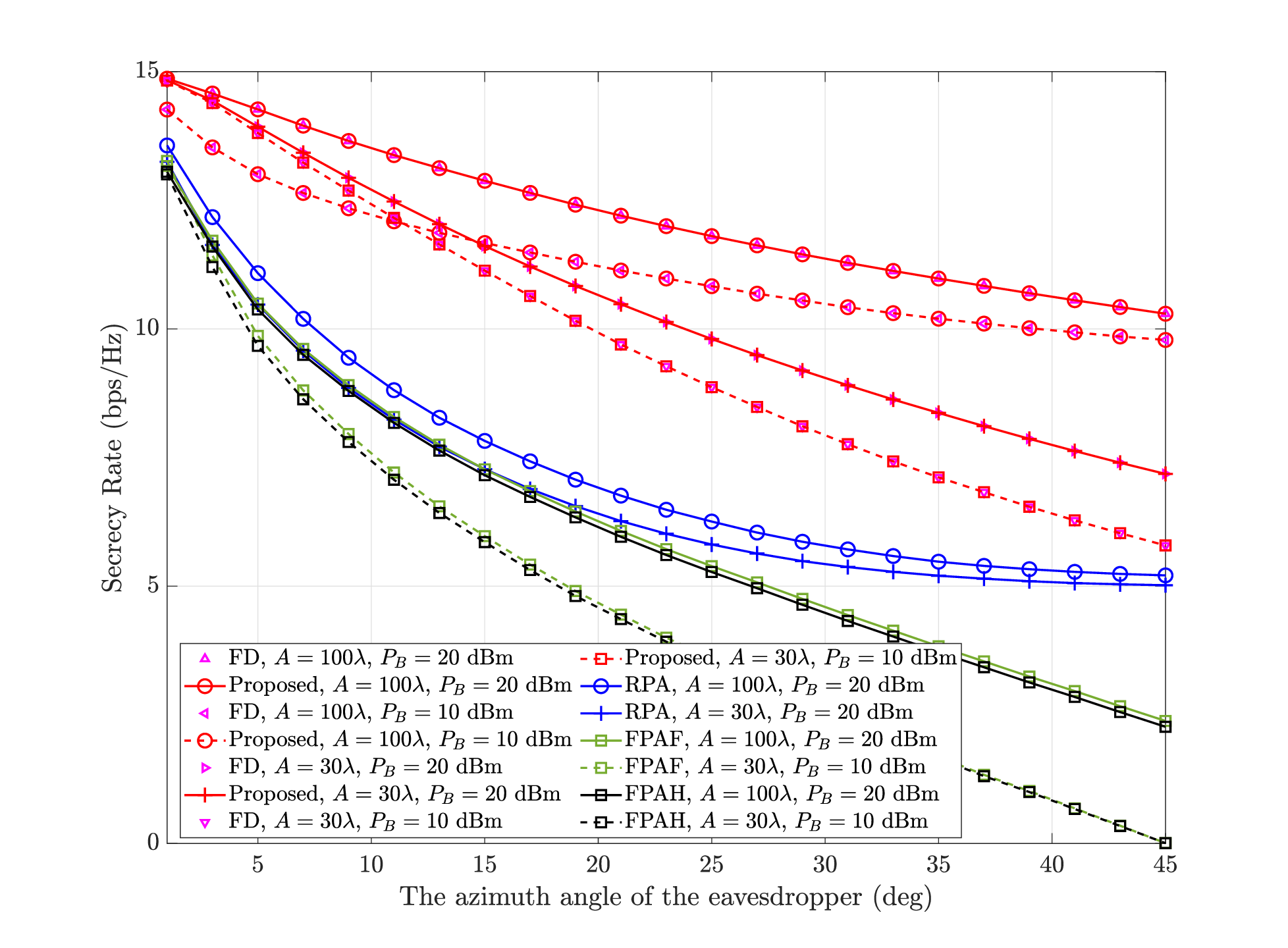}
		\caption{Secrecy rate versus the azimuth angle of the eavesdropper.} \label{fig::SR_vs_thetaE}
	\end{center}
	\vspace{-1.5em}
\end{figure}

In Fig.~\ref{fig::SR_vs_thetaE}, we evaluate the secrecy rate versus the azimuth angle of the eavesdropper, $\theta_E$, under different transmit powers and sizes of the moving region of MAs. It can be observed that the proposed scheme consistently achieves the highest secrecy rate across all configurations, particularly when the array aperture is large ($A=100\lambda$) and the transmit power is high ($P_B=20$ dBm). As $\theta_E$ increases from $0^{\circ}$ to $45^{\circ}$, the secrecy rate decreases for all schemes due to the reduced angular separation between the legitimate user and the eavesdropper, which leads to increased information leakage. Moreover, the impact of the array aperture is significant, as a larger aperture provides a sharper main lobe and better spatial resolution, thereby enhancing the ability to suppress eavesdropping. For instance, the proposed scheme with $A=100\lambda$ outperforms the scheme with $A=30\lambda$ by a notable margin across the entire angular range. Similarly, increasing $P_B$ improves the secrecy rate of the proposed method, reflecting a better signal-to-noise ratio (SNR) at the user while maintaining limited exposure to the eavesdropper. In contrast, the RPA and FPA-based methods show relatively lower secrecy performance, with the FPA-based scheme being the worst, especially for small apertures and lower transmit power. This highlights the effectiveness of the proposed solution in leveraging large-aperture arrays and high-power transmission to enhance PLS.

\begin{figure}[t]
	\begin{center}
		\includegraphics[width= 3.8 in]{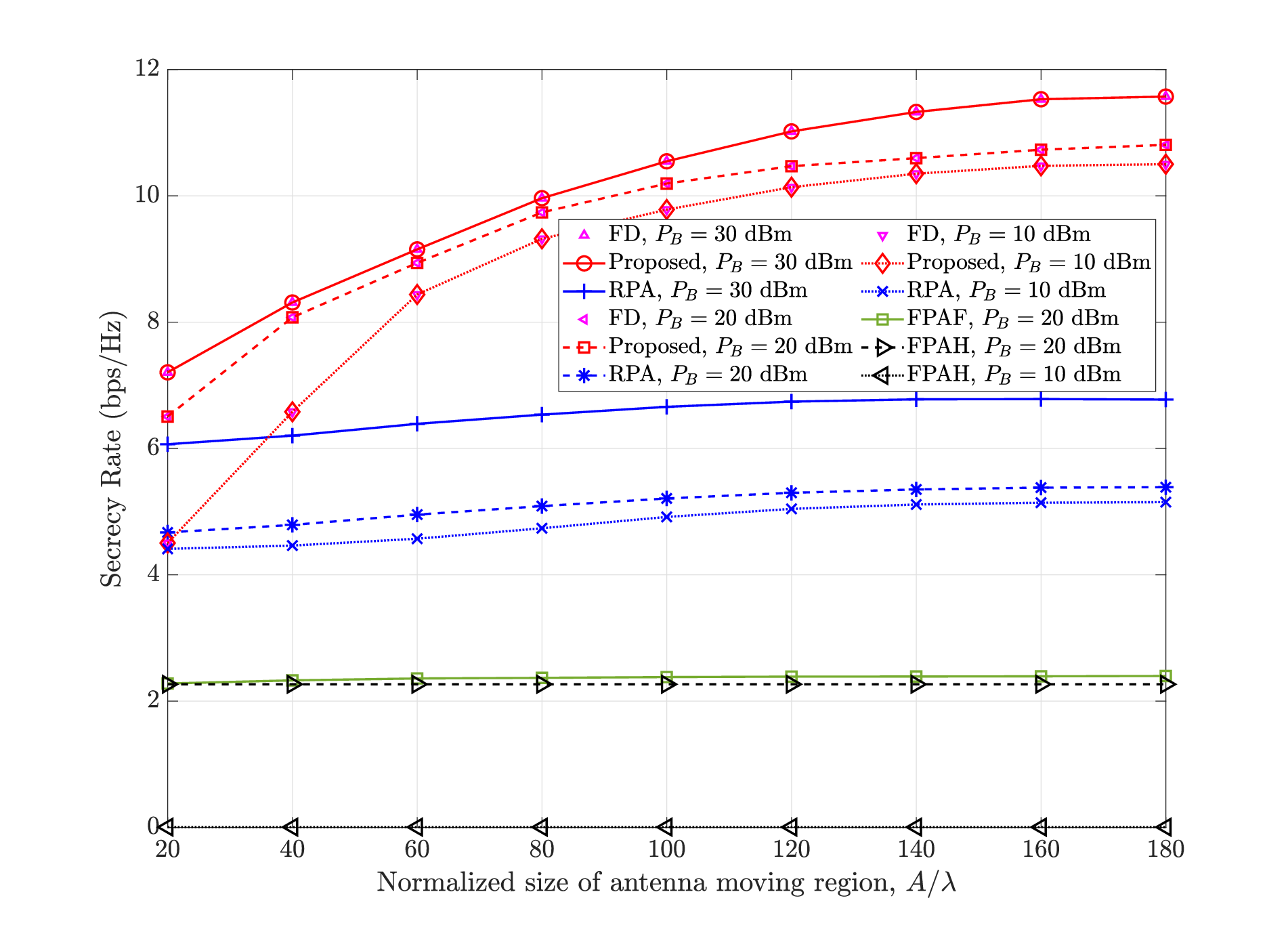}
		\caption{Secrecy rate versus the normalized size of the antenna moving region $A/\lambda$.} \label{fig::SR_vs_A}
	\end{center}
	\vspace{-1.5em}
\end{figure}

Fig.~\ref{fig::SR_vs_A} compares the secrecy rate versus the normalized size of the antenna moving region, i.e., $A/\lambda$, under different transmit powers $P_B$. The proposed scheme consistently outperforms both the RPA and FPA-based methods across all values of $A/\lambda$, especially for higher transmit powers. As $A/\lambda$ increases, the secrecy rate of the proposed, RPA, and FPAF scheme increases. This is because enlarging the moving region provides two key benefits. On the one hand, it expands the equivalent array aperture, thereby extending the near-field region where both the legitimate user and the eavesdropper can be spatially resolved. The proposed method fully exploits both advantages, thereby enhancing secrecy performance. In contrast, the FPAF approach benefits only from the latter, while the FPAH approach fails to capitalize on either benefit.
Thus, the secrecy rate for the proposed scheme with $P_B = 30$ dBm increases more steeply than in the other benchmarks, highlighting the positive effect of higher transmit power on secrecy rate. For the RPA scheme, the secrecy rate remains relatively stable as $A/\lambda$ increases and is consistently lower than that of the proposed scheme, especially for lower transmit power levels. In contrast, the FPAH method demonstrates the worst performance across all conditions, with its secrecy rate remaining nearly constant and significantly lower than the proposed and RPA schemes. 

\begin{figure}[t]
	\begin{center}
		\includegraphics[width= 3.7 in]{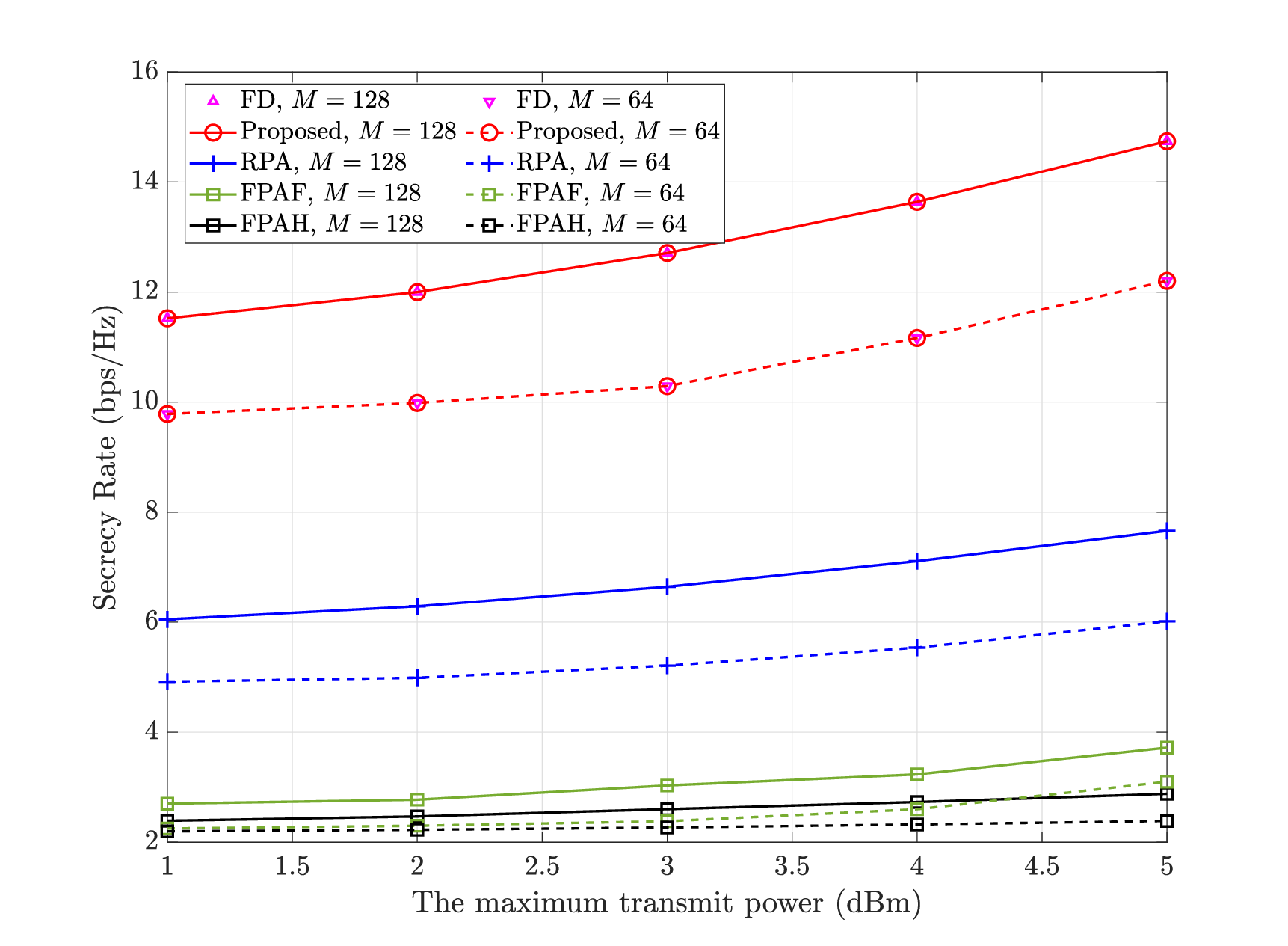}
		\caption{Secrecy rate versus transmit power budget $P_B$.} \label{fig::SR_vs_P}
	\end{center}
	\vspace{-1.5em}
\end{figure}
Fig.~\ref{fig::SR_vs_P} shows the variation in secrecy rate with the maximum transmit power for all schemes with different numbers of MAs $M$.
As shown in the figure, the proposed scheme consistently achieves the highest secrecy rate compared to both RPA and FPA-based schemes across all values of $M$. The secrecy rate enhances as the transmit power grows for all schemes, with a more noticeable improvement observed for $M=128$ compared to $M=64$. This demonstrates that a larger antenna array size enhances the system's ability to exploit the transmit power and enhance secrecy rate.
Specifically, for $M=128$, for $P_{\max} = 20$ dBm, the secrecy rate is improved by 147.7\% compared to the FPAH benchmark. Furthermore, in contrast to FPA-based methods, the proposed MA-assisted approach can reduce transmit power while maintaining the same secrecy rate by using antenna movement to efficiently adjust the channels.

\begin{figure}[t]
	\begin{center}
		\includegraphics[width= 3.7 in]{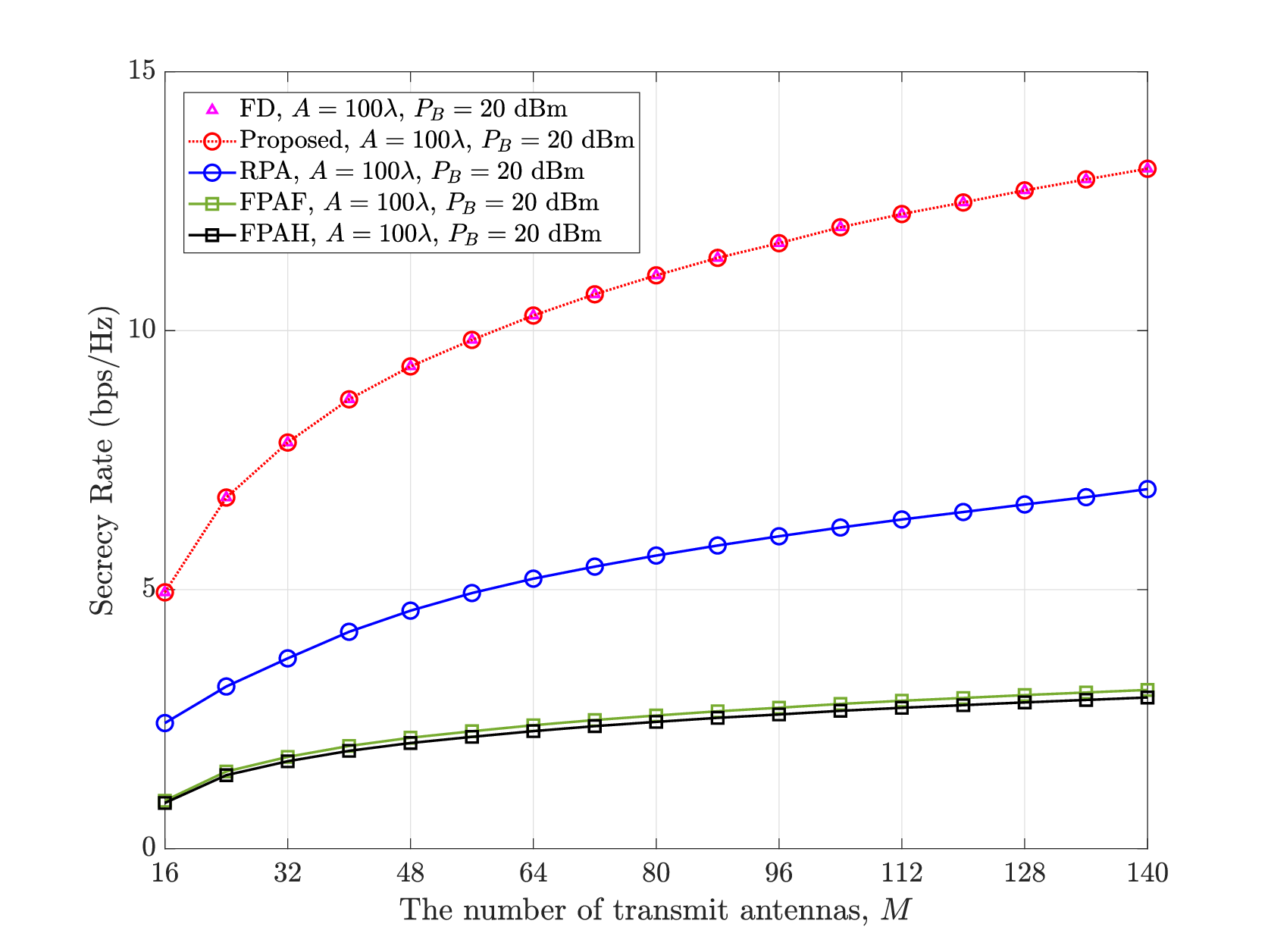}
		\caption{Secrecy rate versus the number of transmit antenna $M$.}\label{fig::SR_vs_M}
	\end{center}
	\vspace{-1.5em}
\end{figure}

Finally, Fig.~\ref{fig::SR_vs_M} shows the secrecy rate performance against the number of antennas at the BS. The secrecy rate provided by all schemes rises with the increase of $M$, suggesting that a broader antenna aperture contributes to improving secure communication performance. The proposed approach demonstrates a notable  performance enhancement compared to the baselines. Specifically, when $N = 140$, compared to the FPAH scheme, the secrecy rate reaches up to $13.13$ bps/Hz.
In addition, as $M$ grows larger, the disparity in performance between the proposed scheme and the baselines becomes increasingly evident, suggesting that a larger antenna array is crucial for maximizing secrecy rate. Furthermore, although increasing the number of antennas improves secrecy rate, the effect is limited.


\section{Conclusions}
In this paper, an MA-assisted secure transmission scheme for near-field MIMO communication systems was proposed, where an eavesdropper attempts to intercept secrecy data transmitted from the BS to the user. We formulated an optimization problem with the goal of maximizing the secrecy rate by jointly optimizing the hybrid beamformers and MA positions. To address the highly non-convex problem, we proposed an AO-based algorithm. Specifically, for the hybrid beamformers design subproblem, we first derived a semi-closed-form expression for the fully-digital beamformer, and subsequently, the hybrid beamformers at the BS were determined through MO techniques. Then, the MM algorithm was employed to solve the MA positions design subproblem. Simulation results confirm that the integration of MAs can significantly improve the security capabilities in near-field MIMO systems. In addition, the proposed scheme can realize secure transmission even when the eavesdropper is situated in the same direction as the user and positioned closer to the BS. Moreover, it was shown that enlarging the size of the MA moving region, adding more transmit power, and increasing the number of transmit antennas demonstrated a further boost to the secrecy rate of MA-aided systems.

\appendices

\section{Proof of Theorem 1} \label{app0}
Let us consider the matrix function as follows
\begin{equation} \small \label{eq::matrixfunction}
	\mathbf{E}(\mathbf{P},\mathbf{W}) \triangleq (\mathbf{I} - \mathbf{P}^{\rm H}\mathbf{H}\mathbf{W})(\mathbf{I} - \mathbf{P}^{\rm H}\mathbf{H}\mathbf{W})^{\rm H} + \mathbf{P}^{\rm H}\mathbf{P}.
\end{equation}
By checking the first-order optimality condition for the function $\log\det(\mathbf{Q}) - {\rm Tr}(\mathbf{Q}\mathbf{E})$, for any positive definite matrix $\mathbf{E} \in \mathbb{C}^{a \times a}$, we can have
\begin{equation} \small \label{eq::fact1}
	-\log\det(\mathbf{E}) = \max_{\mathbf{Q} \succ 0} \log\det(\mathbf{Q}) - {\rm Tr}(\mathbf{Q}\mathbf{E}) + a,
\end{equation}
where the optimal solution is given by $\mathbf{Q}^{\star} = \mathbf{E}^{-1}$. In addition, for any positive definite matrix $\mathbf{W}$, we have
\begin{equation} \small \label{eq::optimalP}
	\mathbf{P}^{\star} \!\triangleq\! (\mathbf{I} + \mathbf{H}\mathbf{W}\mathbf{W}^{\rm H}\mathbf{H}^{\rm H})^{-1}\mathbf{H}\mathbf{W} \!=\! \arg\min_{\mathbf{P}} {\rm Tr}(\mathbf{Q}\mathbf{E}(\mathbf{P}, \mathbf{W})).
\end{equation}
Substituting the optimal $\mathbf{P}^{\star}$ into $\mathbf{E}(\mathbf{P},\mathbf{W})$, we have
\begin{equation} \small \label{eq::fact2}
	\mathbf{E}(\mathbf{P}^{\star}, \mathbf{W}) = (\mathbf{I} + \mathbf{W}^{\rm H}\mathbf{H}^{\rm H}\mathbf{H}\mathbf{W})^{-1}.
\end{equation}
Based on the results from \eqref{eq::fact1} and \eqref{eq::fact2}, we have 
\begin{equation} \small \label{eq::fact3}
	\begin{aligned}
	 &\max_{\mathbf{Q} \succ 0, \mathbf{P}} \log\det(\mathbf{Q}) - {\rm Tr}(\mathbf{Q}\mathbf{E}) + a\\ &\!=\! \log\det(\mathbf{I} + \mathbf{W}^{\rm H}\mathbf{H}^{\rm H}\mathbf{H}\mathbf{W})  \overset{(\text{a})}{=} \log\det(\mathbf{I} + \mathbf{H}\mathbf{W}\mathbf{W}^{\rm H}\mathbf{H}^{\rm H}),
	\end{aligned} 
\end{equation}
where (a) follows the identity $\log\det(\mathbf{I} + \mathbf{X}\mathbf{Y}) = \log\det(\mathbf{I} + \mathbf{Y}\mathbf{X})$.

Therefore, by substituting $\mathbf{W} = \mathbf{W}_D$ and  $\mathbf{H} = \tilde{\mathbf{H}}(\mathbf{t})$ into \eqref{eq::fact3}, we have
\begin{equation} \small \label{eq::subProblem1_FD_transform_U}
	\begin{aligned}
		&\log_2\det\left(\mathbf{I}_{L_U} + \tilde{\mathbf{H}}(\mathbf{t})\mathbf{W}_D\mathbf{W}_D^{\rm H}\tilde{\mathbf{H}}^{\rm H}(\mathbf{t})\right) \\
		&\quad\quad\quad = \max_{\mathbf{Q}_U \succ 0, \mathbf{P}} \log\det(\mathbf{Q}_U) - {\rm Tr}(\mathbf{Q}_U\mathbf{E}) + a.
	\end{aligned}
\end{equation}
Furthermore, by substituting $\mathbf{H} = \tilde{\mathbf{Z}}(\mathbf{t})$ into \eqref{eq::fact3} and let $\mathbf{E} = \mathbf{I}_{L_E} + \tilde{\mathbf{Z}}(\mathbf{t})\mathbf{W}_D\mathbf{W}_D^{\rm H}\tilde{\mathbf{Z}}^{\rm H}(\mathbf{t})$, we obtain
\begin{equation} \small \label{eq::subProblem1_FD_transform_E}
	\begin{aligned}
		&- \log_2\det\left(\mathbf{I}_{L_E} + \tilde{\mathbf{Z}}(\mathbf{t})\mathbf{W}_D\mathbf{W}_D^{\rm H}\tilde{\mathbf{Z}}^{\rm H}(\mathbf{t})\right)\\
		&\max_{\mathbf{Q}_E \succ 0} \log\det(\mathbf{Q}_E) - {\rm Tr}(\mathbf{Q}_E(\mathbf{I}_{L_E} + \tilde{\mathbf{Z}}(\mathbf{t})\mathbf{W}_D\mathbf{W}_D^{\rm H}\tilde{\mathbf{Z}}^{\rm H}(\mathbf{t}))) + a,
	\end{aligned}
\end{equation}
which is equivalent to the sum of the right-hand-side of \eqref{eq::subProblem1_FD_transform_U} and \eqref{eq::subProblem1_FD_transform_E}, this completes the proof.

\section{Derivation of $\nabla f_5(\mathbf{t}_{m}^{\langle t \rangle})$ and $\nabla^2 f_5(\mathbf{t}_{m}^{\langle t \rangle})$} \label{app1}
Denote $\tau_{m,i}^U$ as the $i$-th element of $\boldsymbol{\tau}_{m}^U$, and $\tau_{m,j}^E$ as the $j$-th element of $\boldsymbol{\tau}_{m}^E$, and thus \eqref{eq::subproblem2_transformed4} can be represented by
\begin{equation} \small
	f_5(\mathbf{t}_{m}) \!=\! 2\left(\sum_{i=1}^{L_U}\varrho^U_{m,i}\cos\left(\varphi_{m,i}^U\right) \!+\! \sum_{j=1}^{L_E}\varrho^E_{m,j}\cos\left(\varphi_{m,j}^E\right)\right), 
\end{equation}
where $\varrho^U_{m,i} \triangleq \left|\tau_{m,i}^U\right|\left|g^U_{m,i}\right|$, $\varrho^E_{m,j} \triangleq \left|\tau_{m,j}^E\right|\left|g^E_{m,j}\right|$, $\varphi_{m,i}^U(\mathbf{t}_{m}) = \frac{2\pi}{\lambda}\|\mathbf{t}_{m}-\mathbf{r}^U_i\|_2 - \angle \varrho^U_{m,i}$, and $\varphi_{m,j}^E(\mathbf{t}_{m}) = \frac{2\pi}{\lambda}\|\mathbf{t}_{m}-\mathbf{r}^E_j\|_2 - \angle \varrho^E_{m,j}$, respectively. 
Based on the Fresnel approximation, $\|\mathbf{t}_{m}-\mathbf{r}^I_l\|_2, I \in \{U,E\}$, can be approximated as
\begin{equation} \small
	\begin{aligned}
		&\|\mathbf{t}_{m}-\mathbf{r}^I_l\| \approx r^I_l - y_{m}\sin\theta^I_l \sin\phi^I_l - z_{m}\cos\phi^I_l\\
		&\quad\quad\quad + \frac{y^2_{m} + z^2_{m} - \left(y_{m}\sin\theta^I_l \sin\phi^I_l + z_{m}\cos\phi^I_l\right)^2}{2 r^I_l}\\ &\quad\quad\quad  \triangleq \gamma^I(\mathbf{t}_{m}), I \in \{U,E\}.
	\end{aligned}
\end{equation}
Then, the gradient vector and the Hessian matrix of $f_5(\mathbf{t}_{m})$ w.r.t. $\mathbf{t}_{m}$ is obtained as $\nabla f_5(\mathbf{t}_{m}) = \left[\frac{\partial f_5(\mathbf{t}_{m})}{\partial y_{m}}, \frac{\partial f_5(\mathbf{t}_{m})}{\partial z_{m}}\right]^{\rm T}$ and $\nabla^2 f_5(\mathbf{t}_{m}) = \bigg[\begin{array}{cc}
	\frac{\partial^2 f_5(\mathbf{t}_{m})}{\partial y_{m}\partial y_{m}} & \frac{\partial^2 f_5(\mathbf{t}_{m})}{\partial y_{m}\partial z_{m}} \\
	\frac{\partial^2 f_5(\mathbf{t}_{m})}{\partial z_{m}\partial y_{m}} & \frac{\partial^2 f_5(\mathbf{t}_{m})}{\partial z_{m}\partial z_{m}} 
\end{array}\bigg]$, and the relative terms are shown at the top of the next page. 

\begin{figure*} [ht] 
	\centering
	\vspace*{8pt} 
	\begin{subequations} \small \label{eq::first-order}
		\begin{align}
			\frac{\partial f_5(\mathbf{t}_{m})}{\partial y_{m}} &= -\frac{4 \pi}{\lambda} \left(\sum_{i=1}^{L_U}\varrho^U_{m,i}\sin\left(\varphi_{m,i}^U\right)\frac{\partial \gamma^U(\mathbf{t}_{m})}{\partial y_{m}} + \sum_{j=1}^{L_E}\varrho^E_{m,j}\sin\left(\varphi_{m,j}^E\right)\frac{\partial \gamma^E(\mathbf{t}_{m})}{\partial y_{m}}\right),\\
			 \frac{\partial \gamma^I(\mathbf{t}_{m})}{\partial y_{m}} &= \left(\frac{y_{m} - \left(y_{m}\sin\theta^I_l \sin\phi^I_l + z_{m}\cos\phi^I_l\right)\sin\theta^I_l \sin\phi^I_l}{r^I_l}- \sin\theta^i_l \sin\phi^I_l\right), I \in \{U,E\}, \\
			 \frac{\partial f_5(\mathbf{t}_{m})}{\partial z_{m}} &= -\frac{4 \pi}{\lambda} \left(\sum_{i=1}^{L_U}\varrho^U_{m,i}\sin\left(\varphi_{m,i}^U\right)\frac{\partial \gamma^U(\mathbf{t}_{m})}{\partial z_{m}} + \sum_{j=1}^{L_E}\varrho^E_{m,j}\sin\left(\varphi_{m,j}^E\right)\frac{\partial \gamma^E(\mathbf{t}_{m})}{\partial z_{m}}\right),\\ 
			 \frac{\partial \gamma^I(\mathbf{t}_{m})}{\partial z_{m}} &= \left(\frac{z_{m} - \left(y_{m}\sin\theta^I_l \sin\phi^I_l + z_{m}\cos\phi^I_l\right) \cos\phi^I_l}{r^I_l}- \cos\phi^I_l\right), I \in \{U,E\}.
		\end{align}
\end{subequations}
\vspace{-2.5 em} 	
\end{figure*}
		 
\begin{figure*} [ht] 
	\centering
	\vspace*{8pt} 
	\begin{subequations} \small \label{eq::second-order}
		\begin{align}			 
			\frac{\partial^2 f_5(\mathbf{t}_{m})}{\partial y_{m}\partial y_{m}} &= -\frac{8 \pi^2}{\lambda^2} \left(\sum_{i=1}^{L_U}(\varrho^U_{m,i})^2\cos\left(\varphi_{m,i}^U\right)\left(\frac{\partial \gamma^U(\mathbf{t}_{m})}{\partial y_{m}}\right)^2 + \sum_{j=1}^{L_E}(\varrho^E_{m,j})^2\cos\left(\varphi_{m,j}^E\right)\left(\frac{\partial \gamma^E(\mathbf{t}_{m})}{\partial y_{m}}\right)^2\right),\\
			&\!\!\!\!\!\!\!\!\!\!\!\!\!\!\!\!\!\!\!\!\!\!\!\!\!\!\frac{\partial^2 f_5(\mathbf{t}_{m})}{\partial y_{m}\partial z_{m}} =\frac{\partial^2 f(\mathbf{t}_{m})}{\partial z_{m}\partial y_{m}} \nonumber \\
			&= -\frac{8 \pi^2}{\lambda^2} \left(\sum_{i=1}^{L_U}(\varrho^U_{m,i})^2\cos\left(\varphi_{m,i}^U\right)\frac{\partial \gamma^U(\mathbf{t}_{m})}{\partial y_{m}}\frac{\partial \gamma^U(\mathbf{t}_{m})}{\partial z_{m}} + \sum_{j=1}^{L_E}(\varrho^E_{m,j})^2\cos\left(\varphi_{m,j}^E\right)\frac{\partial \gamma^E(\mathbf{t}_{m})}{\partial y_{m}}\frac{\partial \gamma^E(\mathbf{t}_{m})}{\partial z_{m}}\right),\\
			 \frac{\partial^2 f_5(\mathbf{t}_{m})}{\partial z_{m}\partial z_{m}} &= -\frac{8 \pi^2}{\lambda^2} \left(\sum_{i=1}^{L_U}(\varrho^U_{m,i})^2\cos\left(\varphi_{m,i}^U\right)\left(\frac{\partial \gamma^U(\mathbf{t}_{m})}{\partial z_{m}}\right)^2 + \sum_{j=1}^{L_E}(\varrho^E_{m,j})^2\cos\left(\varphi_{m,j}^E\right)\left(\frac{\partial \gamma^E(\mathbf{t}_{m})}{\partial z_{m}}\right)^2\right).
		\end{align}
	\end{subequations}
		\hrulefill
		\vspace{-1 em}
\end{figure*}


\bibliographystyle{IEEEtran}
\bibliography{./IEEEabrv,./MANF}

\begin{thebibliography}{10}
\providecommand{\url}[1]{#1}
\csname url@samestyle\endcsname
\providecommand{\newblock}{\relax}
\providecommand{\bibinfo}[2]{#2}
\providecommand{\BIBentrySTDinterwordspacing}{\spaceskip=0pt\relax}
\providecommand{\BIBentryALTinterwordstretchfactor}{4}
\providecommand{\BIBentryALTinterwordspacing}{\spaceskip=\fontdimen2\font plus
\BIBentryALTinterwordstretchfactor\fontdimen3\font minus
  \fontdimen4\font\relax}
\providecommand{\BIBforeignlanguage}[2]{{%
\expandafter\ifx\csname l@#1\endcsname\relax
\typeout{** WARNING: IEEEtran.bst: No hyphenation pattern has been}%
\typeout{** loaded for the language `#1'. Using the pattern for}%
\typeout{** the default language instead.}%
\else
\language=\csname l@#1\endcsname
\fi
#2}}
\providecommand{\BIBdecl}{\relax}
\BIBdecl
\renewcommand{\BIBentryALTinterwordstretchfactor}{4}

\bibitem{nguyen2021security}
V.-L. Nguyen, P.-C. Lin, B.-C. Cheng, R.-H. Hwang, and Y.-D. Lin, ``Security
  and privacy for {6G}: A survey on prospective technologies and challenges,''
  \emph{{IEEE} Commun. Surveys Tuts.}, vol.~23, no.~4, pp. 2384--2428, 4th
  Quat. 2021.

\bibitem{wang2018survey}
D.~Wang, B.~Bai, W.~Zhao, and Z.~Han, ``A survey of optimization approaches for
  wireless physical layer security,'' \emph{{IEEE} Commun. Surveys Tuts.},
  vol.~21, no.~2, pp. 1878--1911, 2nd Quat. 2018.

\bibitem{liao2024jamming}
J.~Liao, F.~Zhu, and G.~Hu, ``Jamming-assisted wireless information
  surveillance with movable antenna,'' \emph{{IEEE} Commun. Lett.}, vol.~28,
  no.~12, pp. 2739--2743, Dec. 2024.

\bibitem{ding2024secure}
J.~Ding, Z.~Zhou, C.~Wang, W.~Li, L.~Lin, and B.~Jiao, ``Secure full-duplex
  communication via movable antennas,'' \emph{{IEEE} Trans. Commun.}, early
  access, Mar. 2025, doi:10.1109/TWC.2024.3520806.

\bibitem{mukherjee2021secrecy}
A.~Mukherjee, B.~Ottersten, and L.-N. Tran, ``On the secrecy capacity of {MIMO}
  wiretap channels: Convex reformulation and efficient numerical methods,''
  \emph{{IEEE} Trans. Wireless Commun.}, vol.~69, no.~10, pp. 6865--6878, Oct.
  2021.

\bibitem{shi2015secure}
Q.~Shi, W.~Xu, J.~Wu, E.~Song, and Y.~Wang, ``Secure beamforming for {MIMO}
  broadcasting with wireless information and power transfer,'' \emph{{IEEE}
  Trans. Wireless Commun.}, vol.~14, no.~5, pp. 2841--2853, May 2015.

\bibitem{song2024deep}
Z.~Song, Y.~Lu, X.~Chen, B.~Ai, Z.~Zhong, and D.~Niyato, ``A deep learning
  framework for physical-layer secure beamforming,'' \emph{{IEEE} Trans. Veh.
  Technol.}, vol.~73, no.~12, pp. 19\,844--19\,849, Dec. 2024.

\bibitem{liu2023near}
Y.~Liu, Z.~Wang, J.~Xu, C.~Ouyang, X.~Mu, and R.~Schober, ``Near-field
  communications: A tutorial review,'' \emph{IEEE Open J. Commun. Soc.,},
  vol.~4, pp. 1999--2049, Aug. 2023.

\bibitem{zhang2022beam}
H.~Zhang, N.~Shlezinger, F.~Guidi, D.~Dardari, M.~F. Imani, and Y.~C. Eldar,
  ``Beam focusing for near-field multiuser {MIMO} communications,''
  \emph{{IEEE} Trans. Wireless Commun.}, vol.~21, no.~9, pp. 7476--7490, Sep.
  2022.

\bibitem{cui2022near}
M.~Cui, L.~Dai, Z.~Wang, S.~Zhou, and N.~Ge, ``Near-field rainbow: Wideband
  beam training for {XL-MIMO},'' \emph{{IEEE} Trans. Wireless Commun.},
  vol.~22, no.~6, pp. 3899--3912, Jun. 2022.

\bibitem{maY2024movable}
\BIBentryALTinterwordspacing
Y.~Ma, K.~Liu, Y.~Liu, L.~Zhu, and Z.~Xiao, ``Movable-antenna aided secure
  transmission for {RIS-ISAC} systems,'' \emph{arXiv:2410.03426}, 2024.
  [Online]. Available: \url{https://arxiv.org/abs/2410.03426}
\BIBentrySTDinterwordspacing

\bibitem{tang2024secure}
J.~Tang, C.~Pan, Y.~Zhang, H.~Ren, and K.~Wang, ``Secure {MIMO} communication
  relying on movable antennas,'' \emph{{IEEE} Trans. Commun.}, early access,
  Sep. 2024, doi:10.1109/TCOMM.2024.3465369.

\bibitem{dong2020enhancing}
L.~Dong and H.-M. Wang, ``Enhancing secure {MIMO} transmission via intelligent
  reflecting surface,'' \emph{{IEEE} Trans. Wireless Commun.}, vol.~19, no.~11,
  pp. 7543--7556, Nov. 2020.

\bibitem{zhao2024performance}
\BIBentryALTinterwordspacing
B.~Zhao, C.~Ouyang, X.~Zhang, and Y.~Liu, ``Performance analysis of physical
  layer security: From far-field to near-field,'' \emph{arXiv:2408.10706},
  2024. [Online]. Available: \url{https://arxiv.org/abs/2408.10706}
\BIBentrySTDinterwordspacing

\bibitem{zhang2024near}
Y.~Zhang, H.~Zhang, S.~Xiao, W.~Tang, and Y.~C. Eldar, ``Near-field wideband
  secure communications: An analog beamfocusing approach,'' \emph{{IEEE} Trans.
  Signal Process.}, vol.~72, pp. 2173--2187, Apr. 2024.

\bibitem{zhang2024physical}
Z.~Zhang, Y.~Liu, Z.~Wang, X.~Mu, and J.~Chen, ``Physical layer security in
  near-field communications,'' \emph{{IEEE} Trans. Veh. Technol.}, vol.~73,
  no.~7, pp. 10\,761--10\,766, Jul. 2024.

\bibitem{zhu2023movableMag}
L.~Zhu, W.~Ma, and R.~Zhang, ``Movable antennas for wireless communication:
  Opportunities and challenges,'' \emph{{IEEE} Commun. Mag.}, vol.~62, no.~6,
  pp. 114--120, Jun. 2023.

\bibitem{zhu2025tutorial}
L.~Zhu \emph{et~al.}, ``A tutorial on movable antennas for wireless networks,''
  \emph{{IEEE} Commun. Surveys Tuts.}, early access, Feb. 2025,
  doi:10.1109/COMST.2025.3546373.

\bibitem{zhu2024historical}
\BIBentryALTinterwordspacing
L.~Zhu and K.-K. Wong, ``Historical review of fluid antenna and movable
  antenna,'' \emph{arXiv:2401.02362}, 2024. [Online]. Available:
  \url{https://arxiv.org/abs/2401.02362}
\BIBentrySTDinterwordspacing

\bibitem{zhu2023modeling}
L.~Zhu, W.~Ma, and R.~Zhang, ``Modeling and performance analysis for movable
  antenna enabled wireless communications,'' \emph{{IEEE} Trans. Wireless
  Commun.}, vol.~23, no.~6, pp. 6234--6250, Jun. 2024.

\bibitem{ma2023mimo}
W.~Ma, L.~Zhu, and R.~Zhang, ``{MIMO} capacity characterization for movable
  antenna systems,'' \emph{{IEEE} Trans. Wireless Commun.}, vol.~23, no.~4, pp.
  3392--3407, Apr. 2024.

\bibitem{zhu2024performance}
L.~Zhu, W.~Ma, Z.~Xiao, and R.~Zhang, ``Performance analysis and optimization
  for movable antenna aided wideband communications,'' \emph{{IEEE} Trans.
  Wireless Commun.}, vol.~23, no.~12, pp. 18\,653--18\,668, Dec. 2024.

\bibitem{mei2024movable}
W.~Mei, X.~Wei, B.~Ning, Z.~Chen, and R.~Zhang, ``Movable-antenna position
  optimization: A graph-based approach,'' \emph{{IEEE} Wireless Commun. Lett.},
  vol.~13, no.~7, pp. 1853--1857, Jul. 2024.

\bibitem{xiao2024channel}
Z.~Xiao \emph{et~al.}, ``Channel estimation for movable antenna communication
  systems: A framework based on compressed sensing,'' \emph{{IEEE} Trans.
  Wireless Commun.}, early access, Sep. 2024, doi:10.1109/TWC.2024.3385110.

\bibitem{shao20246d}
X.~Shao, Q.~Jiang, and R.~Zhang, ``{6D} movable antenna based on user
  distribution: Modeling and optimization,'' \emph{{IEEE} Trans. Wireless
  Commun.}, vol.~24, no.~1, pp. 355--370, Jan. 2025.

\bibitem{shao20256dma}
X.~Shao and R.~Zhang, ``{6DMA} enhanced wireless network with flexible antenna
  position and rotation: Opportunities and challenges,'' \emph{{IEEE} Commun.
  Mag.}, vol.~63, no.~4, pp. 121--128, Apr. 2025.

\bibitem{shao20256d}
X.~Shao, R.~Zhang, Q.~Jiang, and R.~Schober, ``{6D} movable antenna enhanced
  wireless network via discrete position and rotation optimization,''
  \emph{{IEEE} J. Sel. Areas Commun.}, vol.~43, no.~4, pp. 674--687, Mar. 2025.

\bibitem{zhu2023movable}
L.~Zhu, W.~Ma, and R.~Zhang, ``Movable-antenna array enhanced beamforming:
  Achieving full array gain with null steering,'' \emph{{IEEE} Commun. Lett.},
  vol.~27, no.~12, pp. 3340--3344, Dec. 2023.

\bibitem{xiao2023multiuser}
Z.~Xiao, X.~Pi, L.~Zhu, X.-G. Xia, and R.~Zhang, ``Multiuser communications
  with movable-antenna base station: Joint antenna positioning, receive
  combining, and power control,'' \emph{{IEEE} Trans. Wireless Commun.}, early
  access, Dec. 2024, doi:10.1109/TWC.2024.3486320.

\bibitem{zhou2024movable}
Y.~Zhou, W.~Chen, Q.~Wu, X.~Zhu, and N.~Cheng, ``Movable antenna empowered
  downlink {NOMA} systems: Power allocation and antenna position
  optimization,'' \emph{{IEEE} Wireless Commun. Lett.}, vol.~13, no.~10, pp.
  2772--2776, Oct. 2024.

\bibitem{lyu2024flexible}
\BIBentryALTinterwordspacing
W.~Lyu, S.~Yang, Y.~Xiu, Z.~Zhang, C.~Assi, and C.~Yuen, ``Flexible beamforming
  for movable antenna-enabled integrated sensing and communication,''
  \emph{arXiv:2405.10507}, 2024. [Online]. Available:
  \url{https://arxiv.org/abs/2405.10507}
\BIBentrySTDinterwordspacing

\bibitem{zhu2024dynamic}
L.~Zhu, X.~Pi, W.~Ma, Z.~Xiao, and R.~Zhang, ``Dynamic beam coverage for
  satellite communications aided by movable-antenna array,'' \emph{{IEEE}
  Trans. Wireless Commun.}, vol.~24, no.~3, pp. 1916--1933, Mar. 2025.

\bibitem{ma2024movable}
W.~Ma, L.~Zhu, and R.~Zhang, ``Movable antenna enhanced wireless sensing via
  antenna position optimization,'' \emph{{IEEE} Trans. Wireless Commun.}, early
  access, Nov. 2024, doi:10.1109/TWC.2024.3443293.

\bibitem{hu2024secure}
G.~Hu, Q.~Wu, K.~Xu, J.~Si, and N.~Al-Dhahir, ``Secure wireless communication
  via movable-antenna array,'' \emph{{IEEE} Signal Process. Lett.}, vol.~31,
  pp. 516--520, Jan. 2024.

\bibitem{xiong2025secure}
W.~Xiong, K.~Zhong, Z.~Xiao, J.~Lin, and Q.~Li, ``Secure analog beamforming
  design for wireless communication systems with movable antennas,'' in
  \emph{Proc. IEEE Int. Conf. Acoust., Speech, Signal Process. (ICASSP)},
  Hyderabad, India, Apr. 2025, pp. 1--5.

\bibitem{hu2024movable}
G.~Hu \emph{et~al.}, ``Movable antennas-assisted secure transmission without
  eavesdroppers' instantaneous {CSI},'' \emph{{IEEE} Trans. Mobile Comput.},
  vol.~23, no.~12, pp. 14\,263--14\,279, Dec. 2024.

\bibitem{feng2024movable}
\BIBentryALTinterwordspacing
Z.~Feng, Y.~Zhao, K.~Yu, and D.~Li, ``Movable antenna empowered physical layer
  security without eve's {CSI}: Joint optimization of beamforming and antenna
  positions,'' \emph{arXiv:2405.16062}, 2024. [Online]. Available:
  \url{https://arxiv.org/abs/2405.16062}
\BIBentrySTDinterwordspacing

\bibitem{cheng2025movable}
Z.~Cheng, C.~Ouyang, and X.~Zhang, ``Movable antenna aided physical layer
  security with no eavesdropper {CSI},'' in \emph{Proc. IEEE Int. Conf.
  Acoust., Speech, Signal Process. (ICASSP)}, Hyderabad, India, Apr. 2025, pp.
  1--5.

\bibitem{liu2024movable}
P.~Liu, J.~Si, Z.~Cheng, Z.~Li, and H.~Hu, ``Movable-antenna enabled covert
  communication,'' \emph{{IEEE} Wireless Commun. Lett.}, vol.~14, no.~2, pp.
  280--284, Feb. 2025.

\bibitem{wang2024movable}
\BIBentryALTinterwordspacing
Y.~Wang, G.~Hu, X.~Hu, X.~Lu, and Y.~Huang, ``Movable antenna array aided ultra
  reliable covert communications,'' \emph{arXiv:2412.20417}, 2024. [Online].
  Available: \url{https://arxiv.org/abs/2412.20417}
\BIBentrySTDinterwordspacing

\bibitem{xie2024movable}
W.~Xie \emph{et~al.}, ``Movable antenna-assisted covert communications with
  reconfigurable intelligent surfaces,'' \emph{{IEEE} Internet Things J.},
  early access, Dec. 2024, doi:10.1109/JIOT.2024.3520710.

\bibitem{cheng2024movable}
\BIBentryALTinterwordspacing
Z.~Cheng, J.~Si, Z.~Li, P.~Liu, X.~Wang, and N.~Al-Dhahir, ``Movable frequency
  diverse array-assisted covert communication with multiple wardens,''
  \emph{arXiv:2407.20280}, 2024. [Online]. Available:
  \url{https://arxiv.org/abs/2407.20280}
\BIBentrySTDinterwordspacing

\bibitem{zhu2025movable}
L.~Zhu, W.~Ma, Z.~Xiao, and R.~Zhang, ``Movable antenna enabled near-field
  communications: Channel modeling and performance optimization,'' \emph{{IEEE}
  Trans. Commun.}, early access, Mar. 2025, doi:10.1109/TCOMM.2025.3547783.

\bibitem{ding2024near}
J.~Ding, L.~Zhu, Z.~Zhou, B.~Jiao, and R.~Zhang, ``Near-field multiuser
  communications aided by movable antennas,'' \emph{{IEEE} Wireless Commun.
  Lett.}, vol.~14, no.~1, pp. 138--142, Jan. 2025.

\bibitem{wang2025antenna}
\BIBentryALTinterwordspacing
Y.~Wang, W.~Mei, X.~Wei, B.~Ning, and Z.~Chen, ``Antenna position optimization
  for movable antenna-empowered near-field sensing,'' \emph{arXiv:2502.03169},
  2025. [Online]. Available: \url{https://arxiv.org/abs/2502.03169}
\BIBentrySTDinterwordspacing

\bibitem{ding2024movable}
\BIBentryALTinterwordspacing
J.~Ding, Z.~Zhou, X.~Shao, B.~Jiao, and R.~Zhang, ``Movable antenna-aided
  near-field integrated sensing and communication,'' \emph{arXiv:2412.19470},
  2024. [Online]. Available: \url{https://arxiv.org/abs/2412.19470}
\BIBentrySTDinterwordspacing

\bibitem{zhu2024suppressing}
\BIBentryALTinterwordspacing
Y.~Zhu, Q.~Wu, Y.~Liu, Q.~Shi, and W.~Chen, ``Suppressing beam squint effect
  for near-field wideband communication through movable antennas,''
  \emph{arXiv:2407.19511}, 2024. [Online]. Available:
  \url{https://arxiv.org/abs/2407.19511}
\BIBentrySTDinterwordspacing

\bibitem{yu2016alternating}
X.~Yu, J.-C. Shen, J.~Zhang, and K.~B. Letaief, ``Alternating minimization
  algorithms for hybrid precoding in millimeter wave {MIMO} systems,''
  \emph{{IEEE} J. Sel. Topics Signal Process.}, vol.~10, no.~3, pp. 485--500,
  Apr. 2016.

\bibitem{sohrabi2016hybrid}
F.~Sohrabi and W.~Yu, ``Hybrid digital and analog beamforming design for
  large-scale antenna arrays,'' \emph{{IEEE} J. Sel. Topics Signal Process.},
  vol.~10, no.~3, pp. 501--513, Apr. 2016.

\bibitem{cui2022channel}
M.~Cui and L.~Dai, ``Channel estimation for extremely large-scale {MIMO}:
  Far-field or near-field?'' \emph{{IEEE} Trans. Commun.}, vol.~70, no.~4, pp.
  2663--2677, Apr. 2022.

\bibitem{lu2023near}
Y.~Lu and L.~Dai, ``Near-field channel estimation in mixed {LoS}/{NLoS}
  environments for extremely large-scale {MIMO} systems,'' \emph{{IEEE} Trans.
  Commun.}, vol.~71, no.~6, pp. 3694--3707, Jun. 2023.

\bibitem{zargari2024riemannian}
S.~Zargari, D.~Galappaththige, C.~Tellambura, and H.~V. Poor, ``A riemannian
  manifold approach to constrained resource allocation in {ISAC},''
  \emph{{IEEE} Trans. Commun.}, early access, Oct. 2024,
  doi:10.1109/TCOMM.2024.3487801.

\bibitem{pan2020intelligent}
C.~Pan \emph{et~al.}, ``Intelligent reflecting surface aided {MIMO}
  broadcasting for simultaneous wireless information and power transfer,''
  \emph{{IEEE} J. Sel. Areas Commun.}, vol.~38, no.~8, pp. 1719--1734, Aug.
  2020.

\bibitem{alhujaili2019transmit}
K.~Alhujaili, V.~Monga, and M.~Rangaswamy, ``Transmit {MIMO} radar beampattern
  design via optimization on the complex circle manifold,'' \emph{{IEEE} Trans.
  Signal Process.}, vol.~67, no.~13, pp. 3561--3575, Jul. 2019.

\bibitem{song2015optimization}
J.~Song, P.~Babu, and D.~P. Palomar, ``Optimization methods for designing
  sequences with low autocorrelation sidelobes,'' \emph{{IEEE} Trans. Signal
  Process.}, vol.~63, no.~15, pp. 3998--4009, Aug. 2015.

\end{thebibliography}

\end{document}